\newcommand{\twodots}{\mathinner{\ldotp\ldotp}}
\newcommand{\proc}[1]{\textnormal{\scshape#1}}
\DeclareMathOperator{\PREF}{PREF}
\begin{document}

\title{Tying up the loose ends in fully LZW-compressed pattern matching\thanks{Supported by MNiSW grant number N~N206 492638, 2010--2012 and START scholarship from FNP.}}
\titlerunning{Fully LZW-compressed pattern matching} 

\author{Pawe\l{} Gawrychowski}
\institute{
	Institute of Computer Science,\\
	University of Wroc{\l}aw,\\
	ul. Joliot-Curie 15, 50--383 Wroclaw,
	Poland \\
	\email{gawry@cs.uni.wroc.pl}
} 

\maketitle

\begin{abstract}
We consider a natural generalization of the classical pattern matching problem: given compressed representations of a pattern $p[1\twodots M]$ and a 
text $t[1\twodots N]$ of sizes $m$ and $n$, respectively, does $p$ occur in $t$? We develop an optimal linear time solution for the case when $p$ and $t$ are compressed using the LZW method. This improves the previously known $\mathcal{O}((n+m)\log(n+m))$ time solution of G\k{a}sieniec and Rytter~\cite{RytterGasieniec}, and essentially closes the line of research devoted to studying LZW-compressed exact pattern matching.

\textbf{Key-words}: pattern matching, compression, Lempel-Ziv
\end{abstract}

\section{Introduction}

One of the most natural problems concerning processing information is {\it pattern matching}, in which we are given a pattern $p[1\twodots M]$ and a text $t[1\twodots N]$, and have to check if there is an occurrence of $p$ in $t$. Although many very efficient (both from a purely theoretical and more practically oriented) solution to this problem are known~\cite{MP,RealGalil,KMP,ConstGalil,Breslauer,BM}, most data is archived and stored in a compressed form. This suggest an intriguing research direction: if the text, or both the pattern and the text, are given in their compressed representations, do we really need to decompress them in order to detect an occurrence? If just the text is compressed, this is the {\it compressed pattern matching} problem. For Lempel-Ziv-Welch compression, used for example in Unix \texttt{compress} utility and GIF images, Amir {\it et al.} introduced two algorithm with respective time complexities $\mathcal{O}(n+M^2)$ and $\mathcal{O}(n\log M+M)$, where $n$ is the compressed size of the text. The pattern preprocessing time was then improved~\cite{Kosaraju} to get $\mathcal{O}(n+M^{1+\epsilon})$ time complexity. In a recent paper~\cite{GawrychowskiLZW} we proved that in fact a $\mathcal{O}(n+M)$ solution is possible, as long as the alphabet consists of integers which can be sorted in linear time. A more general problem is the {\it fully compressed pattern matching}, where both the text and the pattern are compressed.
This problem seems to be substantially more involved than compressed pattern matching, as we cannot afford to perform any preprocessing for every possible prefix/suffix of the pattern, and such preprocessing is a vital ingredient of any efficient pattern matching algorithm known to the author.
Nevertheless, G\k{a}sieniec and Rytter~\cite{RytterGasieniec} developed a $\mathcal{O}((n+m)\log(n+m))$ time algorithm for this problem, where $n$ and $m$ are the compressed sizes of the text and the pattern, respectively. 

In this paper we show that in fact an optimal linear time solution is possible for fully LZW-compressed pattern matching. The starting point of our algorithm is the $\mathcal{O}(n+M)$ time algorithm~\cite{GawrychowskiLZW}. Of course we cannot afford to use it directly as $M$ might be of order $m^2$. Nevertheless, we can apply the method to a few carefully chosen fragments of the pattern. Then, using those fragments we try to prune the set of possible occurrences, and verify them all at once by a combined usage of the so-called $\PREF$ function and iterative compression of the (compressed) pattern similar to the method of~\cite{RytterGasieniec}. The chosen fragments correspond to the most complicated part of the pattern, in a certain sense. If there is no such part, we observe that the pattern is periodic, and a modification of the algorithm from~\cite{GawrychowskiLZW} can be applied. To state this modification, and prove its properties, we briefly review the algorithm from~\cite{GawrychowskiLZW} in the next section. While the modification itself might seem simple, we would like to point out that it is nontrivial, and we need quite a few additional ideas in order to get the result.

\section{Preliminaries}

We consider strings over finite alphabet $\Sigma$ (which consists of integers which can be sorted in linear time, namely $\Sigma=\{1,2,\ldots,(n+m)^c \}$) given in a Lempel-Ziv-Welch compressed form, where a string is represented as a sequence of {\it codewords}. Each codeword is either a single letter or a previously occurring codeword concatenated with a single character. This additional character is not given explicitly: we define it as the first character of the next codeword, and initialize the set of codewords to contain all single characters in the very beginning. The resulting compression method enjoys a particularly simple encoding/decoding process, but unfortunately requires outputting at least $\Omega(\sqrt{N})$ codewords (so, for example, we are not able to achieve an exponential compression possible in the general Lempel-Ziv method). Still, its simplicity and good compression ratio achieved on real life instances make it an interesting model to work with. For the rest of the paper we will use LZW when referring to Lempel-Ziv-Welch compression.

We are interested in a variation of the classical pattern matching problem: given a pattern $p[1\twodots M]$ and a text $t[1\twodots N]$, does $p$ occur in $t$? We assume that both $p$ and $t$ are given in LZW compressed forms of size $n$ and $m$, respectively, and wish to achieve a running time depending on the size of the compressed representation $n+m$, not of the original lengths $N$ and $M$. If the pattern does occur in the text, we would like to get the position of its first occurrence. We call such problem the {\it fully compressed pattern matching}.

A closely related problem is the {\it compressed pattern matching}, where we aim to detect the first occurrence of an uncompressed pattern in a compressed text. In a previous paper~\cite{GawrychowskiLZW}, we proved that this problem can be solved in deterministic linear time. This $\mathcal{O}(n+M)$ time algorithm will be our starting point. Of course we cannot directly apply it as $M$ might be of order $m^2$. Nevertheless, a modified version of this solution will be one of our basic building bricks. In order to state the modification and prove its correctness we briefly review the idea behind the original algorithm in the remaining part of this section. First we need a few auxiliary lemmas. A {\it period} of a word $w$ is an integer $0<d<|w|$ such that $w[i]=w[i+d]$ whenever both letters are defined.

\begin{lemma}[Periodicity lemma]\label{lemma:periodicity}
If both $d$ and $d'$ are periods of $w$, and $d+d'\leq |w|+\gcd(d,d')$, then $\gcd(d,d')$ is a period as well.
\end{lemma}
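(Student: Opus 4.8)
This is the Fine--Wilf theorem, and the approach I would take is a Euclidean-style descent on the sum $d+d'$. Assume without loss of generality that $d\ge d'$. If $d=d'$ then $g=\gcd(d,d')=d$ is already a period and there is nothing to prove, so this serves as the base case. Otherwise I would replace the pair $(d,d')$ by $(d',d-d')$: since $\gcd(d',d-d')=\gcd(d',d)=g$, the new pair has the same gcd, its sum is $d'+(d-d')=d$, and $d\le|w|+g$ because $d<d+d'\le|w|+g$. Thus the sum strictly decreases while the hypothesis is preserved, and the induction goes through as soon as both $d'$ and $d-d'$ are known to be periods. Everything therefore hinges on a single claim: $d-d'$ is a period of $w$.

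To prove that claim I would fix an index $i$ with $1\le i\le|w|-(d-d')$ and try to realize the required equality $w[i]=w[i+d-d']$ by composing the shifts by $d$ and by $d'$ that the periodicity hypotheses already give. When $i+d\le|w|$, chaining period $d$ and then period $d'$ yields $w[i]=w[i+d]=w[i+d-d']$; when $i>d'$, chaining period $d'$ and then period $d$ yields $w[i]=w[i-d']=w[i+d-d']$. The only indices left uncovered are those with $i\le d'$ and $i+d>|w|$ simultaneously, and these occur precisely when $|w|<d+d'$, i.e. inside the slack of size $g$ permitted by the hypothesis $d+d'\le|w|+g$.

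I expect these boundary indices to be the main obstacle: a small example ($d=3$, $d'=2$, $|w|=4$) shows that there no single pair of shifts connects $w[i]$ to $w[i+d-d']$, and one is forced to compose a longer alternating chain of $\pm d,\pm d'$ shifts. The clean way to dispatch them --- and to see exactly where the sharp bound $+g$ is consumed --- is to pass to the position graph on $\{1,\dots,|w|\}$ whose edges join $i$ to $i+d$ and to $i+d'$, each edge linking two positions forced to carry equal letters. Rescaling a single residue class modulo $g$ reduces matters to the coprime case; there, repeatedly subtracting $d'$ drives every vertex into the block $\{1,\dots,d'\}$, on which the shift $j\mapsto j+d$ realizes a single cycle because $\gcd(d,d')=g$. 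The condition $d+d'-g\le|w|$ is exactly what guarantees that all these shifts stay inside $[1,|w|]$, so each residue class is a single connected component; hence $w[i]=w[i+g]$ for all admissible $i$, which establishes the claim and closes the induction.
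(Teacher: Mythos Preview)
The paper does not prove this lemma at all: it is stated as the classical Fine--Wilf periodicity lemma and used as a black box. So there is nothing in the paper to compare your argument against; I can only comment on the proposal itself.

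Your argument is essentially sound, but its structure is tangled. You set up a Euclidean descent whose inductive step requires showing that $d-d'$ is a period; when the two one-step chains fail on the boundary indices, you switch to the position-graph/connectivity argument. That argument, however, already yields $w[i]=w[i+g]$ for all admissible $i$ --- i.e.\ it proves the full conclusion directly. The induction scaffolding never does any work and should be dropped; what you actually wrote is the standard direct proof dressed up as an inductive one.

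Two small points in the connectivity part deserve a sentence each. First, the ``rescaling to the coprime case'' silently uses that the length hypothesis survives restriction to a residue class modulo $g$: one needs $d/g+d'/g-1\le L_r$ where $L_r$ is the number of positions in the class, and this follows because $d+d'-g$ is a multiple of $g$ not exceeding $|w|$, hence not exceeding $g\lfloor|w|/g\rfloor\le gL_r$. Second, in the coprime case with $|w|=d+d'-1$, the step $j\mapsto j+d$ from $j=d'$ lands at $|w|+1$ and is not an edge of the position graph; your cycle is therefore missing one arc. This is harmless --- a cycle minus one arc is still a path through all $d'$ vertices --- but you should say so rather than assert that ``all these shifts stay inside $[1,|w|]$''.
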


Using any linear time suffix array construction algorithm and fast LCA queries~\cite{BenderLCA} we get the following.

\begin{lemma}\label{lemma:equality}
Pattern $p$ can be preprocessed in linear time so that given any two fragments $p[i\twodots i+k]$ and $p[j\twodots j+k]$ we can find their longest common prefix (suffix) in constant time.
\end{lemma}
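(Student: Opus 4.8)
The plan is to reduce each longest-common-prefix query to a longest common extension query between two suffixes of $p$, which in turn reduces to a single lowest-common-ancestor query on a structure built once in the preprocessing phase. First I would construct the suffix tree of $p$ (equivalently, the suffix array of $p$ together with its $\id{LCP}$ array) in linear time; this is precisely where the standing assumption that $\Sigma$ consists of integers sortable in linear time is used, as it lets us run a linear-time suffix-sorting algorithm. In the suffix tree the leaf associated with the suffix $p[i\twodots M]$ is labelled by its starting position $i$, and the string-depth of the lowest common ancestor of the leaves labelled $i$ and $j$ equals the length $\ell$ of the longest common prefix of the two suffixes $p[i\twodots M]$ and $p[j\twodots M]$.

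Next I would preprocess the tree for constant-time lowest-common-ancestor queries using the method of~\cite{BenderLCA}, which requires only linear time and space (equivalently, one may answer the query by a range-minimum query on the $\id{LCP}$ array, since range-minimum and lowest-common-ancestor are interreducible in this framework). A query then proceeds as follows: given the fragments $p[i\twodots i+k]$ and $p[j\twodots j+k]$, locate the two corresponding leaves, take their lowest common ancestor, read off its string-depth $\ell$, and return $\min(\ell, k+1)$. The cap by $k+1$ is essential because each fragment has length $k+1$, so their common prefix cannot exceed that length even when the underlying suffixes happen to agree further. Each query therefore runs in constant time.

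For the longest common suffix I would simply repeat the whole construction on the reversed pattern $p^{R}$. A longest common suffix of $p[i\twodots i+k]$ and $p[j\twodots j+k]$ is exactly a longest common prefix of the corresponding reversed fragments, i.e.\ of fragments of $p^{R}$ beginning at the mirrored starting positions, so the identical lowest-common-ancestor machinery answers it in constant time after a second linear-time preprocessing pass.

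I do not expect a genuine obstacle, as every ingredient here is a standard linear-time primitive; the only points that demand care are bookkeeping ones, namely translating fragment endpoints correctly into suffix (and reversed-suffix) starting positions, and remembering to truncate the returned value at the fragment length $k+1$ so that a match extending beyond the fragments is never reported.
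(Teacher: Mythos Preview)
Your proposal is correct and matches the paper's approach exactly: the paper does not give a formal proof but simply states before the lemma that it follows from ``any linear time suffix array construction algorithm and fast LCA queries~\cite{BenderLCA},'' which is precisely the suffix-tree/suffix-array plus LCA reduction you describe (including the reversal trick for suffixes). Your write-up is in fact more detailed than the paper's one-line justification.
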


\begin{lemma}\label{lemma:longest suffix}
Pattern $p$ can be preprocessed in linear time so that given any fragment $p[i\twodots j]$ we can find its longest prefix which is a suffix of the whole pattern in constant time, assuming we know the (explicit or implicit) vertex corresponding to $p[i\twodots j]$ in the suffix tree.
\end{lemma}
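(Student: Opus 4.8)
The plan is to build the suffix tree $T$ of $p\$ $ (with a sentinel $\$$ smaller than every letter), which takes linear time on our integer alphabet and can be obtained, e.g.\ from the suffix array of Lemma~\ref{lemma:equality}, and to reduce the query to a nearest-marked-ancestor problem. Reading $p[i\twodots j]$ from the root spells out exactly the path to the vertex $v$ corresponding to the fragment, so a weak ancestor of $v$ (including $v$ itself) at string-depth $\ell$ represents the length-$\ell$ prefix $p[i\twodots i+\ell-1]$ of the fragment. Calling a vertex (explicit or implicit) a \emph{suffix node} when the string it spells is a suffix of $p$, the answer we seek is precisely the string-depth of the deepest suffix node that is a weak ancestor of $v$. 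Thus it suffices to mark all suffix nodes and to support nearest-marked-ancestor queries in constant time, the only twist being that marks may fall in the interior of edges.

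The key observation that makes this easy is that \emph{every edge of $T$ contains at most one suffix node in its interior, located one position before a terminal $\$$}. Indeed, if a suffix $s$ of $p$ occurs also in the interior of $p$ (equivalently, inside $T$ its locus is extended by a non-$\$$ character), then $s$ is followed both by $\$$, coming from the occurrence at the end of $p\$ $, and by that non-$\$$ character, so its locus branches and is explicit. Hence an implicit suffix node is always immediately followed by $\$$, i.e.\ it sits one position above a leaf, and a single edge can host at most one such position. Consequently the implicit marks are trivial to enumerate (one per leaf whose incoming edge ends in $\$$ and has length at least two, at string-depth equal to the leaf depth minus one), while every explicit suffix node is exactly an explicit vertex possessing a $\$$-child, and both families are found in linear time.

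With this structure in hand the preprocessing is a single depth-first traversal: maintaining the deepest marked vertex on the current root-to-vertex path, I would record for every explicit vertex $u$ the deepest explicitly marked weak ancestor $g(u)$, and for every edge I would store whether it carries an interior mark together with its depth $d^\star$. Linear-time suffix-tree construction plus one traversal give linear preprocessing.

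A query for $v=p[i\twodots j]$ at string-depth $d$ is then answered in constant time. If $v$ is explicit, the answer is the depth of $g(v)$ (which is $0$, the empty suffix, exactly when the only marked ancestor is the root). If $v$ is implicit on an edge $(u,w)$, I first test that edge's interior mark: since an interior mark sits just above a leaf, it can be an ancestor only of points on this very edge, so if the edge carries a mark at depth $d^\star\le d$ this mark is the deepest marked weak ancestor and the answer is $d^\star$; otherwise no interior mark lies above $v$ and the answer is the depth of $g(u)$. I expect the only real subtlety to be the structural claim that interior suffix nodes occur exactly before terminal $\$$'s; once that is established, correctness of the case analysis and the linear preprocessing and constant query time are routine.
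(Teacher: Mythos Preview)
Your argument is correct. The paper itself does not supply a proof of this lemma: it is quoted from the companion paper~\cite{GawrychowskiLZW} as part of the summary of tools underlying \proc{Levered-pattern-matching}, so there is no in-paper proof to compare against. The reduction you chose---marking the loci of all suffixes of $p$ in the suffix tree of $p\$$ and answering with a nearest-marked-ancestor query---is the natural one, and your structural claim that every implicit suffix locus sits exactly one character above a leaf (hence at most one per edge) is the right key observation and is proved correctly.

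Two small remarks on the write-up. First, when $v$ is implicit on an edge $(u,w)$ carrying an interior mark, that mark lies at depth $\mathrm{depth}(w)-1$ while $v$ has depth $d<\mathrm{depth}(w)$; hence your test $d^\star\le d$ can only succeed with equality, i.e.\ exactly when $v$ itself is the suffix locus. The logic remains sound, it is just slightly over-general. Second, it is worth stating explicitly that $p[i\twodots j]$ never corresponds to a leaf of the suffix tree of $p\$$ (leaves spell strings ending in $\$$); this is why in the ``$v$ explicit'' case you need not inspect the incoming edge for an interior mark and $g(v)$ alone suffices.
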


A {\it border} of a word $w$ is an integer $0<b<|w|$ such that $w[1\twodots b]=w[|w|-b\twodots |w|]$. By applying the preprocessing from the Knuth-Morris-Pratt algorithm for both the pattern and the reversed pattern we get the following.

\begin{lemma}\label{lemma:borders preprocessing}
Pattern $p$ can be preprocessed in linear time so that we can find the border of each its prefix (suffix) in constant time.
\end{lemma}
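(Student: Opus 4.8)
The plan is to reduce everything to the classical prefix (failure) function of the Knuth--Morris--Pratt algorithm. Recall that this function assigns to each prefix $p[1\twodots i]$ the length of its longest proper border, and that all of its values can be computed simultaneously in linear time by the standard amortized left-to-right scan: we maintain the length of the current longest border and shrink it along already computed failure links whenever a mismatch occurs. Storing the resulting values in a table gives constant time access to the longest border of any prefix, which settles the prefix part of the lemma immediately.

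For the suffix part I would exploit the fact that borders are invariant under reversal. If $w[1\twodots b]=w[|w|-b+1\twodots |w|]$, then applying reversal to both sides yields $w^R[1\twodots b]=w^R[|w|-b+1\twodots |w|]$, so $b$ is a border of $w$ if and only if it is a border of $w^R$. Now a suffix $p[i\twodots |p|]$ read backwards is exactly the prefix $p^R[1\twodots |p|-i+1]$ of the reversed pattern. Hence the longest border of $p[i\twodots |p|]$ coincides with the longest border of that prefix of $p^R$, which is precisely the value delivered by the failure function computed for $p^R$. Running the same linear time preprocessing on $p^R$ and indexing into the resulting table therefore answers suffix queries in constant time as well.

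There is no genuine obstacle here beyond careful bookkeeping: one only has to fix the convention relating a suffix of $p$ to a prefix of $p^R$, and verify that the length returned by the failure function of $p^R$ is transported back to a border length of the original suffix without an off-by-one error. Both tables occupy linear space and are built in linear time, so the combined preprocessing meets the stated bound and every query is answered by a single table lookup.
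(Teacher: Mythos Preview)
Your proposal is correct and is exactly the approach the paper has in mind: the lemma is stated immediately after the sentence ``By applying the preprocessing from the Knuth-Morris-Pratt algorithm for both the pattern and the reversed pattern we get the following,'' which is precisely your construction (failure function of $p$ for prefixes, failure function of $p^R$ for suffixes via the reversal-invariance of borders). You have simply spelled out the details that the paper leaves implicit.
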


A {\it snippet} is any substring $p[i\twodots j]$ of the pattern, represented as a pair of integers $(i,j)$. If $i=1$ we call it a {\it prefix snippet}, and if $j=|p|$ a {\it suffix snippet}. An {\it extended snippet} is a snippet for which we also store the corresponding vertex in the suffix tree 
(built for the pattern) and the longest suffix which is a prefix of the pattern. A sequence of snippets is a concatenation of a number of substrings of the pattern.

A high-level idea behind the linear time LZW-compressed pattern matching is to first reduce the problem to pattern matching in a sequence of extended snippets. It turns out that if the alphabet is of constant size, the reduction can be almost trivially performed in linear time, and for polynomial size integer alphabets we can apply more sophisticated tools to get the same complexity. Then we focus on pattern matching in a sequence of snippets. The idea is to simulate the well-known Knuth-Morris-Pratt algorithm while operating on whole snippets instead of single characters using Lemma~\ref{lemma:concatenation occurrence} and Lemma~\ref{lemma:extending long border}.

\begin{lemma}\label{lemma:concatenation occurrence}
Given a prefix snippet and a suffix snippet we can detect an occurrence of the pattern in their concatenation in constant time.
\end{lemma}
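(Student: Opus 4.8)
The plan is to classify the possible occurrences of $p$ in the concatenation $P\cdot S$, where $P=p[1\twodots a]$ is the prefix snippet and $S=p[b\twodots M]$ is the suffix snippet. Assuming both snippets are proper (so $|P|,|S|<M$), no occurrence can fit entirely inside $P$ or inside $S$, hence every occurrence must cross the boundary between the two snippets. Such a crossing occurrence is completely described by the length $\alpha$ of the portion of $p$ falling inside $P$: the first $\alpha$ letters of $p$ must form a suffix of $P$, and the remaining $M-\alpha$ letters must form a prefix of $S$. Since $P$ is itself a prefix of $p$, the first requirement says precisely that $\alpha$ is a border of $P$ (or $\alpha=a$); since $S$ is itself a suffix of $p$, the second requirement says that $p[\alpha+1\twodots M]$ is simultaneously a prefix and a suffix of $S$, i.e.\ that $M-\alpha$ is a border of $S$ (or $M-\alpha=|S|$). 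Equivalently, writing $d_P=a-\alpha$ and $d_S=\alpha+1-b$, we need $d_P$ to be a period of $P$ and $d_S$ a period of $S$, with the identity $d_P+d_S=a-b+1$ forced.

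For any single candidate $\alpha$ both conditions can be verified in constant time: the border tests reduce to one longest-common-prefix/suffix query on fragments of $p$ by Lemma~\ref{lemma:equality}, and the shortest periods of $P$ and $S$ are available in constant time from Lemma~\ref{lemma:borders preprocessing}. I would first dispatch the two extreme candidates, each of which satisfies one of the conditions automatically: $\alpha=a$ (the occurrence starting at the very beginning of $P\cdot S$), for which it remains to check that $p[a+1\twodots M]$ is a prefix of $S$; and $\alpha=b-1$ (the occurrence consuming all of $S$), for which it remains to check that $p[1\twodots b-1]$ is a suffix of $P$. Both checks are single fragment comparisons.

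The genuine difficulty is that the admissible cuts $\alpha$ range over the whole interval $[b-1,a]$, and a priori there may be linearly many borders of $P$ (resp.\ of $S$) inside it, so we cannot afford to enumerate them. This is where the periodicity lemma enters. If an occurrence with a strictly interior cut exists, then the overlap region $O=p[b\twodots a]$ of length $c=a-b+1$ has both periods $d_P$ and $d_S$ with $d_P+d_S=c$, so by Lemma~\ref{lemma:periodicity} it has period $g=\gcd(d_P,d_S)$ and is therefore highly periodic. I would use this to show that, once restricted to the feasible interval $[b-1,a]$, the border-lengths of $P$ and the translates $M$ minus the border-lengths of $S$ each form a single arithmetic progression sharing the common difference $g$; the admissible cuts are then the intersection of two arithmetic progressions, which can be computed in constant time by elementary modular arithmetic, and the resulting unique candidate (if any) is confirmed by one more fragment comparison via Lemma~\ref{lemma:equality}.

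The step I expect to be the main obstacle is the last one: proving cleanly that inside $[b-1,a]$ the relevant border-lengths really do collapse to a single arithmetic progression, so that the whole search is constant rather than $\mathcal{O}(\log)$ time. The periodicity lemma supplies the common difference, but one has to rule out sporadic ``large'' borders (equivalently, short-overlap phases), and to treat the boundary cases $b=1$, $a=M$, and $a<b$ (a genuine gap, where no crossing occurrence exists) separately.
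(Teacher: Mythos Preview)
The paper does not actually prove Lemma~\ref{lemma:concatenation occurrence}; it is quoted from~\cite{GawrychowskiLZW} as part of the background summary, so there is no in-paper argument to compare against directly.

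Your reduction is exactly the right one: a crossing occurrence is determined by a cut $\alpha\in[b-1,a]$ with $\alpha\in\{a\}\cup\bord(P)$ and $M-\alpha\in\{|S|\}\cup\bord(S)$, and the two extreme cuts $\alpha=a$ and $\alpha=b-1$ are dispatched by a single longest-common-extension query each via Lemma~\ref{lemma:equality}. The difficulty you isolate is also the genuine one.

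Where the proposal has a real gap is the interior case. The quantity $g=\gcd(d_P,d_S)$ is defined in terms of the unknown occurrence $\alpha$, so invoking it to \emph{find} $\alpha$ is circular: different interior cuts give different pairs $(d_P,d_S)$ and hence different $g$. Replacing $g$ by the smallest period $g_0$ of the overlap $O=p[b\twodots a]$ removes the circularity, but the claim that $\bord(P)\cap[b-1,a]$ is a \emph{single} arithmetic progression with step $g_0$ is still false in general: $P$ can have sporadic short borders below $a/2$ that do not fit any such progression, and nothing you have said rules them out.

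The standard repair, and what the cited paper does, is a short case split rather than a global statement about all borders in $[b-1,a]$. One first observes that every interior cut satisfies $\alpha\ge a/2$ or $M-\alpha\ge |S|/2$ (otherwise $a+|S|<M$, which is impossible). In the first case $\alpha$ is a \emph{long} border of $P$, and those genuinely form a single arithmetic progression $\{a-k\pi_P:0\le k\le K\}$ with step the smallest period $\pi_P$ of $P$; the remaining condition on the $S$-side can then be tested for the whole progression with a constant number of longest-common-extension queries, because the candidate suffixes $p[\alpha+1\twodots M]$ are nested and $\pi_P$-periodic on their overlap with $P$. The second case is symmetric with the roles of $P$ and $S$ exchanged. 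This sidesteps precisely the obstacle you anticipated, without needing to control all borders of $P$ in $[b-1,a]$.
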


\begin{lemma}\label{lemma:extending long border}
Given a prefix snippet $p[1\twodots i]$ and a snippet $p[j\twodots k]$ we can find the longest long border $b$ of $p[1\twodots i]$ such that $p[1\twodots b]p[j\twodots k]$ is a prefix of the whole $p$ in constant time, where a long border is $b\geq\frac{i}{2}$ such that $p[1\twodots b]=p[i-b+1\twodots i]$.
\end{lemma}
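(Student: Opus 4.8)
The plan is to use the Periodicity Lemma to expose the rigid structure of the set of long borders, and then to read off the answer with a constant number of longest common prefix queries. First I would invoke Lemma~\ref{lemma:borders preprocessing} to obtain in constant time the longest border $b^\star$ of $p[1\twodots i]$ and set $d=i-b^\star$, the shortest period of $p[1\twodots i]$. If $b^\star<\frac{i}{2}$ there is no long border and we report failure. Otherwise $d\le\frac{i}{2}$, so by Lemma~\ref{lemma:periodicity} every period of $p[1\twodots i]$ of length at most $\frac{i}{2}$ is a multiple of $d$; equivalently the long borders are precisely $b_r=i-rd$ for $r=1,\dots,R$ with $R=\lfloor\frac{i}{2d}\rfloor$. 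Thus they form an arithmetic progression, and the task becomes that of finding the smallest index $r$ for which $p[1\twodots b_r]\,p[j\twodots k]$ is a prefix of $p$.

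Next I would translate the condition into an analytic one. Put $\ell=k-j+1$ and let $v=p[i-d+1\twodots i]$ be the last period block, so that the suffix $p[b_r+1\twodots]$ equals $v^r$ followed by $p[i+1\twodots]$. Then $p[1\twodots b_r]\,p[j\twodots k]$ is a prefix of $p$ exactly when $b_r+\ell\le M$ and
\begin{equation*}
f(r):=\mathrm{LCP}\bigl(v^r\,p[i+1\twodots],\; p[j\twodots]\bigr)\ge\ell .
\end{equation*}
The heart of the proof is to show that this apparently $r$-dependent quantity collapses to the closed form $f(r)=\min(q,\,c+rd)$, where $q$ is the length of the longest $v$-periodic prefix of $p[j\twodots]$ and $c=\mathrm{LCP}(p[i+1\twodots],\,p[i-d+1\twodots])$ measures how far the period $d$ survives past position $i$. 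Both scalars are computed in constant time by Lemma~\ref{lemma:equality}; indeed $q=d+\mathrm{LCP}(p[j\twodots],\,p[j+d\twodots])$ as soon as $p[j\twodots j+d-1]=v$, and is smaller than $d$ otherwise. The identity is established by a short case split on whether the periodic run of $p[j\twodots]$ or that of $p[i+1\twodots]$ terminates first, each comparison being forced by the period $d$; the only index escaping the formula is the unique $r_0=\frac{q-c}{d}$ at which the two runs end simultaneously, and there $f(r_0)=q+\mathrm{LCP}(p[i+1+c\twodots],\,p[j+q\twodots])$ may be strictly larger.

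Given the formula the optimisation is routine. If $q\ge\ell$ then $f(r)\ge\ell$ reduces to the linear constraint $c+rd\ge\ell$, which together with $b_r+\ell\le M$ and $1\le r\le R$ defines an interval of admissible $r$; its left endpoint, hence the longest admissible border, is obtained by one ceiling division. If $q<\ell$ the generic formula never attains $\ell$, so the only possible witness is the exceptional index $r_0$: I would check that it is an integer in $[1,R]$ and that $q+\mathrm{LCP}(p[i+1+c\twodots],\,p[j+q\twodots])\ge\ell$ together with $b_{r_0}+\ell\le M$, a last application of Lemma~\ref{lemma:equality}. In either branch only a constant number of longest common prefix queries and arithmetic operations are used, so the whole procedure runs in constant time.

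The main obstacle is precisely the collapse of $f(r)$ to $\min(q,c+rd)$. A priori each of the $\Theta(\frac{i}{d})$ long borders could behave independently and a naive search would be forced to examine them one by one. The point is that $p[j\twodots]$ and $p[i+1\twodots]$ interact with the periodic word $v^\infty$ only through the two scalars $q$ and $c$, so that the admissible indices form a single interval cut out by one linear threshold, with at most one sporadic extra candidate $r_0$. Proving this is where the Periodicity Lemma does the real work, and it is exactly what turns a linear scan over the progression into a constant-time computation.
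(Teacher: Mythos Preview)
The paper does not prove this lemma; it is quoted from the author's earlier work~\cite{GawrychowskiLZW} as part of the preliminaries, so there is no in-paper argument to compare against. What I can do is assess your proposal on its own.

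Your approach is sound and follows the natural route. The crucial structural fact---that the long borders of $p[1\twodots i]$ are exactly the arithmetic progression $b_r=i-rd$ for $r=1,\dots,\lfloor i/(2d)\rfloor$, where $d$ is the shortest period---is correctly derived from Lemma~\ref{lemma:periodicity}, and reducing the search to the smallest admissible $r$ is the right move. The closed form $f(r)=\min(q,\,c+rd)$ is also correct: both $p[b_r+1\twodots]$ and $p[j\twodots]$ agree with $v^\infty$ for exactly $rd+c$ and $q$ characters respectively, so whenever these two counts differ the shorter one is the LCP; when they coincide you fall into the single exceptional index $r_0$, which you handle with one further LCP query. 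Your computation of $q$ via $d+\mathrm{LCP}(p[j\twodots],p[j+d\twodots])$ after first checking $p[j\twodots j+d-1]=v$ is right, and the identity $c=\mathrm{LCP}(p[i+1\twodots],p[i-d+1\twodots])$ genuinely equals the length of the $v$-periodic prefix of $p[i+1\twodots]$, as you implicitly use.

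Two small points worth tightening. First, the quantities $q$ and $c$ may be bounded not by a mismatch but by running out of string (e.g.\ $i=M$ gives $c=0$ with no mismatch, or $p[j\twodots M]$ may be entirely a prefix of $v^\infty$); in those boundary cases the ``exceptional $r_0$'' reasoning needs a sentence of care, since there is then no actual character disagreement at position $q+1$ or $c+1$ to anchor the argument. Second, when $q\ge\ell$ you should note explicitly that the exceptional index $r_0$ is automatically admissible (since $f(r_0)\ge q\ge\ell$) and is already captured by the linear constraint $c+rd\ge\ell$, so it creates no extra case there. Neither point threatens the constant-time bound; they are just places where the write-up should be explicit.
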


During the simulation we might create some new snippets, but they will be always either prefix snippets or {\it half snippets} of the form $p[\frac{i}{2}\twodots i]$. All information required to make those snippets extended can be precomputed in a relatively straightforward way using $\mathcal{O}(m)$ time.

The running time of the resulting procedure is as much as $\Theta(n\log m)$, though. To accelerate it we try to detect situations when there is a long snippet near the beginning of the sequence, and apply Lemma~\ref{lemma:lever occurrence} and Lemma~\ref{lemma:lever prefix} to quickly process all snippets on its left.

\begin{lemma}\label{lemma:lever occurrence}
Given a sequence of extended snippets $s_1 s_2 \ldots s_i$ such that $|s_i|\geq 2\sum_{j<i}|s_j|$, we can detect an occurrence of $p$ in $s_1 s_2 \ldots s_i$ in time $\mathcal{O}(i)$.
\end{lemma}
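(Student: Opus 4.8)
The plan is to split the occurrences of $p$ in $s_1\cdots s_i$ according to how they meet the long last snippet. Write $u=s_1\cdots s_{i-1}$ and $L=\sum_{j<i}|s_j|$. Since every snippet is a factor of $p$ we have $|s_i|\le M$, so $L\le |s_i|/2\le M/2<M$. Consequently no occurrence can fit inside $u$, and an occurrence lying entirely inside $s_i$ can exist only if $s_i=p$, which I would test in $\mathcal{O}(1)$ from the suffix-tree vertex stored with $s_i$. It therefore remains to find the occurrences crossing from $u$ into $s_i$. Such an occurrence is described by the length $q\le L$ of the part of $p$ aligned with $u$: its head $p[1\twodots q]$ must be a suffix of $u$, while its tail $p[q+1\twodots M]$, of length $M-q\ge M/2$, must be a prefix of $s_i$.

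Next I would describe the admissible tails. Using Lemma~\ref{lemma:longest suffix} and the vertex stored with $s_i$ I compute in $\mathcal{O}(1)$ the length $\ell^{*}$ of the longest prefix of $s_i$ that is a suffix of $p$, and put $Q=p[M-\ell^{*}+1\twodots M]$. A suffix of $p$ of length $\ell=M-q$ is a prefix of $s_i$ exactly when $\ell=\ell^{*}$ or $\ell$ is a border of $Q$. Because only tails with $\ell\ge M-L\ge M/2\ge \ell^{*}/2$ are relevant, Lemma~\ref{lemma:periodicity} collapses the admissible lengths into a single arithmetic progression $\ell^{*},\ell^{*}-\rho,\ell^{*}-2\rho,\dots$, where $\rho$ is the shortest period of $Q$; equivalently the admissible head lengths are $q_k=M-\ell^{*}+k\rho$ with $q_k\le L$.

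Then I would decide whether some admissible head is a suffix of $u$. A single length $q$ can be tested by scanning the snippets of $u$ from the right: each step compares a factor of $p$ with a factor of $p$ and is answered in $\mathcal{O}(1)$ by Lemma~\ref{lemma:equality}, so one test costs $\mathcal{O}(i)$. If there is more than one admissible length then $Q$ has period $\rho\le \ell^{*}/2$, so $p=p[1\twodots q_0]\,Q$ with $Q$ of period $\rho$, and every admissible head is $p[1\twodots q_0]$ followed by $k$ copies of the block $\pi=p[q_0+1\twodots q_0+\rho]$. To settle all of them at once I would run a single right-to-left scan of $u$ that strips the maximal number of trailing copies of $\pi$ and then checks whether $p[1\twodots q_0]$ precedes them, again in $\mathcal{O}(i)$; this simultaneously yields the longest matching head, i.e.\ the leftmost occurrence. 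Once that head is known, the presence of a crossing occurrence can alternatively be certified by a single call to Lemma~\ref{lemma:concatenation occurrence} on the prefix snippet $p[1\twodots q]$ and the suffix snippet $Q$.

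I expect the periodic case to be the main obstacle: when $\rho$ is small there are $\Theta(L/\rho)$ candidate alignments, and testing them individually would cost $\mathcal{O}(i\cdot L/\rho)$, far above the budget. The point is that Lemma~\ref{lemma:periodicity} forces all these candidates into one arithmetic progression sharing the common block $\pi$, which lets a single scan of $u$ resolve the whole progression; making this reduction precise, and verifying that the head-border and tail-border structures are respected simultaneously, is the delicate part of the argument.
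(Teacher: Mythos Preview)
The paper does not actually prove this lemma: it is stated in the preliminaries as part of a review of the earlier linear-time compressed-matching algorithm of~\cite{GawrychowskiLZW}, and the reader is referred there for the argument. So there is no in-paper proof to compare against.

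On its own merits, your outline is structurally sound and identifies the right decomposition: occurrences inside $u$ are impossible since $L<M$, the case $s_i=p$ is immediate, and crossing occurrences are governed by which suffixes of $p$ are prefixes of $s_i$. Your use of Lemma~\ref{lemma:longest suffix} to obtain $\ell^{*}$ and the reduction of admissible tail lengths to the long borders of $Q$ is correct, and invoking Lemma~\ref{lemma:periodicity} to collapse those borders into a single arithmetic progression with step $\rho$ is exactly the key idea that makes $\mathcal{O}(i)$ achievable.

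The one place where your sketch is genuinely loose is the ``strip trailing copies of $\pi$ and then check $p[1\twodots q_0]$'' step. As written this can fail: if $p[1\twodots q_0]$ itself ends in one or more copies of $\pi$, or if the characters of $u$ preceding the visible $\pi$-run happen to extend the periodicity, then the maximal $\pi$-run at the end of $u$ need not align with the decomposition $p[1\twodots q_0]\pi^{k}$, and checking $p[1\twodots q_0]$ only after stripping \emph{all} trailing $\pi$'s may give the wrong answer. The fix is standard but needs to be stated: compute how far the period $\rho$ extends leftward in $u$ (one right-to-left scan using Lemma~\ref{lemma:equality}, $\mathcal{O}(i)$ time), compute how far it extends leftward from position $q_0$ in $p$ (a single longest-common-suffix query), and combine the two extents arithmetically to determine the set of $k$ for which $p[1\twodots q_0]\pi^{k}$ is a suffix of $u$. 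You flag this as ``the delicate part'', which is accurate; spelling out this periodic-extension argument is what is missing for the proof to be complete.
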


\begin{lemma}\label{lemma:lever prefix}
Given a sequence of extended snippets $s_1 s_2 \ldots s_i$ such that $|s_i|\geq 2\sum_{j<i}|s_j|$, we can compute the longest prefix of $p$ which is a suffix of $s_1 s_2 \ldots s_i$ in time $\mathcal{O}(i)$.
\end{lemma}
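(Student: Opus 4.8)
The plan is to exploit the hypothesis $|s_i|\geq 2\sum_{j<i}|s_j|$ in the same spirit as Lemma~\ref{lemma:lever occurrence}: I split the desired length according to whether the matching prefix fits inside $s_i$, read off the short case from the stored extended-snippet data, and show that the long case is forced to be highly periodic.

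Write $L=|s_i|$, $S=\sum_{j<i}|s_j|$ (so $L\geq 2S$), $u=s_1\cdots s_{i-1}$, $w=us_i$, and let $r$ denote the length we seek, i.e. the longest prefix of $p$ equal to a suffix of $w$. First I would distinguish $r\leq L$ from $r>L$. If $r\leq L$ the matching prefix lies entirely inside $s_i$, hence it is the longest suffix of $s_i$ that is a prefix of $p$; this is exactly the value $q$ stored with the extended snippet $s_i$, so it is available in $\mathcal{O}(1)$. Since $q$ is always a (possibly non-maximal) valid answer, I keep it as a lower bound and only have to decide whether some $r>L$ exceeds it.

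Next I would analyse the case $r>L$. Here the last $L$ symbols of the matching prefix are precisely $s_i$, so $s_i$ occurs in $p$ ending at position $r$, i.e. starting at $c+1$ with $c=r-L\in[1,S]$; conversely each occurrence of $s_i$ starting at some $c+1\in[2,S+1]$ whose preceding block $p[1\twodots c]$ is a suffix of $u$ gives a candidate $r=c+L$. All these starts lie in a window of width $S<L=|s_i|$, so any two of the corresponding occurrences of $s_i$ overlap; by Lemma~\ref{lemma:periodicity} their starts form a single arithmetic progression with common difference $d=\period(s_i)$, and the region of $p$ they cover has period $d$. I would recover this progression (its first term, the difference $d$, and the number of terms) from the suffix-tree vertex stored with $s_i$ together with a constant number of longest-common-extension queries (Lemma~\ref{lemma:equality}).

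It then remains to find the largest $c$ in this progression for which $p[1\twodots c]$ is a suffix of $u$, after which the answer is $\max(q,c+L)$. Demanding that $p[1\twodots b]s_i$ be a prefix of $p$ for a long border $b$ of the current matched prefix is exactly the configuration of Lemma~\ref{lemma:extending long border}, so one call returns the largest admissible $c$ lying above half the current prefix length, and such a $c$ automatically dominates every shorter admissible one. The genuine difficulty, and where I expect the real work to be, is the $\mathcal{O}(i)$ accounting: the progression may have far more than $i$ terms, so I cannot test candidates one by one, nor can I match $p[1\twodots c]$ against a suffix of $u$ in time proportional to its length. The idea is to process the snippets $s_{i-1},\dots,s_1$ of $u$ from right to left, spending only a constant number of equality queries and one long-border step per snippet, and to argue that because the occurrence-starts and the border structure of the matched prefix are governed by the \emph{same} period $d$, each long-border drop is forced to land back on the progression and therefore occurs at most once per snippet, ruling out the usual logarithmic cascade of border failures. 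Proving precisely this — together with the separate treatment of whether the periodic run of $p$ does or does not reach position $1$ — is the crux of the lemma.
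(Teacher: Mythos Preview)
The paper does not contain a proof of this lemma. It appears in the Preliminaries section (Section~2) together with Lemmas~\ref{lemma:concatenation occurrence}, \ref{lemma:extending long border} and~\ref{lemma:lever occurrence} as background imported from the author's earlier paper~\cite{GawrychowskiLZW}; all four are stated without argument and then used as black boxes inside \proc{Lazy-levered-pattern-matching}. Consequently there is no ``paper's own proof'' to compare your proposal against.

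As to the proposal itself: the overall shape is the natural one --- separate $r\le|s_i|$ (read off from the extended-snippet data) from $r>|s_i|$, and in the latter case use that all starts of $s_i$ inside the short window $[1,S+1]$ overlap and hence form a single arithmetic progression of step $\period(s_i)$. You correctly identify that the crux is the $\mathcal{O}(i)$ accounting for selecting the largest term $c$ of that progression with $p[1\twodots c]$ a suffix of $u=s_1\cdots s_{i-1}$, and you are honest that this step is only sketched. One point to be careful about: knowing the longest prefix of $p$ that is a suffix of $u$ is \emph{not} by itself enough, since for $c'<c$ the string $p[1\twodots c']$ need not be a suffix of $u$ even when $p[1\twodots c]$ is; so the ``largest progression term below the longest match'' shortcut fails, and the right-to-left scan over $s_{i-1},\dots,s_1$ really must interact with the progression structure, as you indicate. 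The periodicity argument you allude to --- that border drops land back on the same progression and therefore happen at most once per snippet --- is indeed the mechanism used in~\cite{GawrychowskiLZW}, so your plan is on the right track, but the present paper will not help you fill in the details.
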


After such modification the algorithm works in linear time, which can be shown by defining a potential function depending just on the lengths of the snippets, see the original paper.

\section{Overview of the algorithm}

Our goal is to detect an occurrence of a pattern $p[1\twodots M]$ in a given text $t[1\twodots N]$, where $p$ and $t$ are described by a Lempel-Ziv-Welch parse of size $m$ and $n$, respectively. The difficulty here is rather clear: $M$ might be of order $m^{2}$, and hence looking at each possible prefix or suffix of the pattern would imply a quadratic (or higher) total complexity. As most efficient uncompressed pattern algorithms are
based on a more or less involved preprocessing concerning all prefixes or suffixes, such quadratic behavior seems difficult to avoid. Nevertheless,
we can try to use the following reasoning here: either the pattern is really complicated, and then $m$ is very similar to $M$, hence we can use the linear compressed pattern matching algorithm sketched in the previous section, or it is in some sense repetitive, and we can hope to speedup the preprocessing by building on this repetitiveness. In this section we give a high level formalization of this intuition. 

We will try to process whole codewords at once. To this aim we need the following technical lemma which allows us to compare large chunks of the text (or the pattern) in a single step. It follows from the linear time construction of the so-called {\it suffix tree of a tree}~\cite{ShibuyaTree} and constant time LCA queries~\cite{BenderLCA}.

\begin{lemma}
\label{lemma:parse preprocessing}
It is possible to preprocess in linear time a LZW parse of a text over an alphabet consisting of integers which can be sorted in linear time so that given any two codewords we can compute their longest common suffix in constant time.
\end{lemma}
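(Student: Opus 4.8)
The plan is to build a suffix-tree-like structure over all codewords so that longest common suffix queries reduce to lowest common ancestor queries. The key observation is that LZW codewords have a natural tree structure: each codeword is either a single character or a previously defined codeword extended by one character at the end. If instead we think about the \emph{reversed} codewords, then extending a codeword $c$ by a character $a$ to form $ca$ corresponds, on the reversed string, to prepending $a$, i.e. forming $a\cdot\mathrm{reverse}(c)$. Thus the reversed codewords naturally form a trie-like (in fact, tree-shaped) object, and the longest common suffix of two codewords equals the longest common prefix of their reversals. So the first step is to set up the reversed-codeword tree explicitly: I would create one node per codeword, where the node for $ca$ has as parent the node for $c$ (the codeword it was built from), labeling the connecting edge with the single new character $a$. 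Reading the labels from a node up to the root spells out exactly the reversal of the corresponding codeword.

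Next I would invoke the suffix tree of a tree construction of Shibuya~\cite{ShibuyaTree}, applied to this reversed-codeword tree. The crucial point is that this construction runs in linear time precisely when the alphabet consists of integers sortable in linear time, which is exactly our assumption $\Sigma=\{1,2,\ldots,(n+m)^c\}$. The suffix tree of a tree indexes all the root-to-node paths (equivalently, all suffixes of all root-to-leaf paths), so every reversed codeword appears as a path from the root in this structure, and hence corresponds to an explicit or implicit vertex. With this indexing in place, the longest common prefix of two reversed codewords is read off from the depth of the lowest common ancestor of the two corresponding vertices in the suffix tree.

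Finally I would preprocess the suffix tree for constant-time LCA queries using the standard linear-time/constant-query data structure of Bender and Farach-Colton~\cite{BenderLCA}. To answer a query for two given codewords, I locate their two vertices (precomputed during the construction), take their LCA, and report the string depth of that LCA; by the correspondence above this is exactly the length of the longest common suffix of the two original codewords, and from it one can also recover the suffix itself as a substring. All three stages (building the reversed-codeword tree, running the suffix-tree-of-a-tree construction, and installing the LCA structure) take linear time, and each query is answered in constant time, giving the claim.

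The main obstacle is not any single step in isolation but making the reversal correspondence airtight: I must verify that extending a codeword at its end really does translate into a clean prepend-at-the-front operation on the reversals so that the parent/child relation in my tree matches string concatenation, and that the suffix-tree-of-a-tree machinery indexes exactly the set of strings I need (all reversed codewords as root paths) rather than some superset or subset. I would also double-check that the linear-time guarantee of the Shibuya construction genuinely carries over to a tree whose edges carry integer labels from our polynomial-size alphabet, since the whole point of the lemma is the linear bound under the integer-alphabet assumption.
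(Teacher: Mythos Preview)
Your proposal is correct and matches the paper's approach essentially step for step: the ``reversed-codeword tree'' you describe is exactly the LZW codeword trie, the node-to-root label sequence in that trie is the reversed codeword, and the paper likewise builds Shibuya's suffix tree of this trie and answers longest common suffix queries via constant-time LCA using~\cite{BenderLCA}. The only cosmetic difference is that the paper phrases the reduction directly in terms of the codeword trie and the strings $s_T(v)$ read along $v$-to-root paths, without explicitly framing it as a reversal.
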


We defer its proof to Section~\ref{section:lzw parse} as it is not really necessary to understand the whole idea. As an obvious corollary, given two codewords we can check if the shorter is a suffix of the longer in constant time.

In the very beginning we reverse both the pattern and the text. This is necessary because the above lemma tells how to compute the longest common suffix, and we would actually like to compute the longest common prefix. The only way we will access the characters of both the pattern and the text
is either through computing the longest common prefix of two reversed codewords, or retrieving a specified character of a reversed codeword (which can be performed in constant time using level ancestor queries), hence the input can be safely reversed without worrying that it will make working with it more complicated. We call those reversed codewords \emph{blocks}. Note that all suffixes of a block are valid blocks as well.

We start with classifying all possible patterns into two types. Note that this classification depends on both $m$ (size of the compressed pattern) and $n$ (size of the compressed texts) which might seem a little unintuitive.

\begin{definition}
A \emph{kernel} of the pattern is any (uncompressed) substring of length $n+m$ such that its border is at most $\frac{n+m}{2}$. A kernel decomposition of the pattern is its periodic prefix with period at most $\frac{n+m}{2}$ followed by a kernel.
\end{definition}

Note that the distance between two occurrences of such substring must be at least $\frac{n+m}{2}$, and hence a kernel occurs at most $2m$ times in the pattern and $2n$ times in the text. It might happen that there is no kernel, or in other words all relatively short fragments are highly repetitive. In such case the whole pattern turns out to be highly repetitive.

\begin{lemma}
\label{lemma:kernel}
The pattern either has a kernel decomposition or its period is at most $n+m$. Moreover, those two situations can be distinguished in linear time, and if a decomposition exists it can be found in the same complexity.
\end{lemma}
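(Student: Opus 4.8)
The plan is to prove Lemma~\ref{lemma:kernel} by scanning the pattern for the leftmost substring of length $n+m$ whose shortest border is at most $\frac{n+m}{2}$, and to argue that if no such substring exists then the whole pattern is highly periodic. First I would set $L=n+m$ and, for every starting position $i$, consider the window $w_i=p[i\twodots i+L-1]$ of length $L$ (when it fits inside the pattern). By the definition, $w_i$ is a kernel exactly when its longest border is at most $\frac{L}{2}$, equivalently when its shortest period is at least $\frac{L}{2}$. Using Lemma~\ref{lemma:borders preprocessing} I can precompute the border of every prefix of the pattern, and more to the point I want the shortest period of each length-$L$ window; this is precisely the information delivered by the Knuth--Morris--Pratt failure function restricted to windows of length $L$, so all windows can be classified as kernel or non-kernel in $\mathcal{O}(M)$ total time. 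Since we may only spend $\mathcal{O}(n+m)$ time, I would instead compute these quantities lazily from left to right, stopping at the first window that qualifies as a kernel; the key point is that the prefix preceding the first kernel is itself highly periodic, so its length is controlled and we never need to touch more than $\mathcal{O}(n+m)$ characters (details below).

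The heart of the argument is the dichotomy. Suppose no window $w_i$ is a kernel, i.e.\ every length-$L$ window has period at most $\frac{L}{2}$. I would then show that these local periods ``glue together'' into a single global period of the whole pattern that is at most $L=n+m$. The mechanism is the periodicity lemma (Lemma~\ref{lemma:periodicity}): if $w_i$ has period $d_i\leq\frac{L}{2}$ and the adjacent window $w_{i+1}$ has period $d_{i+1}\leq\frac{L}{2}$, then on their overlap (which has length $L-1\geq L - \gcd$, easily satisfying the hypothesis $d_i+d_{i+1}\leq L+\gcd(d_i,d_{i+1})$ since $d_i+d_{i+1}\leq L$) the two periods force a common period $\gcd(d_i,d_{i+1})$ on the union. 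Iterating this overlap argument across all consecutive windows propagates a period of length $\leq\frac{L}{2}\leq L$ across the entire pattern, which is exactly the claim that the period of $p$ is at most $n+m$. Conversely, if some window is a kernel, I take the leftmost such window; then every window strictly to its left is non-kernel and hence periodic with period $\leq\frac{L}{2}$, and the same gluing argument shows the prefix ending just before that window has period $\leq\frac{L}{2}$, yielding the required kernel decomposition as a periodic prefix followed by the kernel.

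For the complexity claim, I would run a left-to-right computation that maintains the shortest period of the current window using the incremental KMP failure-function update, advancing one position at a time and testing whether the current period exceeds $\frac{L}{2}$. The subtlety is the running-time bound: naively this costs $\Theta(M)$, which is too much. Here I would use the structural fact just proved: as long as we have not found a kernel, the scanned prefix is periodic with period $\leq\frac{L}{2}$, so rather than scanning character by character I can detect the first kernel, or certify global periodicity, by examining the period of just the first few length-$L$ windows and then verifying that this period extends to the end of the pattern. Verifying that a candidate period $d\leq L$ extends through the whole (compressed) pattern is the one genuinely compression-sensitive step, and it is where I expect the main obstacle to lie: I cannot afford to read all $M$ characters, so I must use Lemma~\ref{lemma:parse preprocessing} and constant-time longest-common-prefix queries on blocks to check periodicity directly over the compressed representation in $\mathcal{O}(n+m)$ time. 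Establishing that this compressed periodicity test is both correct and runs in linear time in the compressed size is the crux of the linear-time guarantee; the combinatorial dichotomy itself follows cleanly from the periodicity lemma.
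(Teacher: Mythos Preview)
Your overall strategy converges to the paper's, and the dichotomy argument via the periodicity lemma is sound, but the linear-time claim has a real gap. The paper's proof is more direct than your sliding-window plan: it computes the period $d$ of the single window $p[1\twodots L]$; if $d\geq L/2$ this window is already a kernel, and otherwise it extends $d$ as far as it goes using block-level longest-common-prefix queries (Lemma~\ref{lemma:parse preprocessing}). If $d$ reaches the end, the whole pattern is $d$-periodic. The crucial step you are missing is what happens when $d$ stops, say at position $k$: the paper shows that the \emph{single} window $s'a=p[k-L+2\twodots k+1]$ is necessarily a kernel. The argument is a short contradiction: if $s'a$ had a period $d'<L/2$, then $d$ and $d'$ would both be periods of $s'=p[k-L+2\twodots k]$ of length $L-1$, so by Lemma~\ref{lemma:periodicity} their $\gcd$ is too; but then the two characters predicted at position $|s'|+1$ by $d$ and by $d'$ (namely $b\neq a$) sit inside $s'$ at positions differing by a multiple of that $\gcd$, which is impossible. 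This is exactly what pins the number of windows you ever inspect to two.

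Your proposal replaces this with the sentence ``examining the period of just the first few length-$L$ windows and then verifying that this period extends.'' Without the contradiction above you cannot bound ``a few'': your gluing argument correctly proves the dichotomy, but it does not by itself prevent the local period from changing (while staying below $L/2$) each time it breaks, and every such break would cost you another $\Theta(L)$ period computation on a fresh window. At best, the $\gcd$ of successive local periods strictly decreases, which only yields $\mathcal{O}(\log L)$ rounds and hence $\mathcal{O}((n+m)\log(n+m))$ time, not linear. Once you add the one-line contradiction showing the window at the break point is always a kernel, your approach collapses into the paper's proof; the separate left-to-right gluing over all windows becomes unnecessary.
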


\begin{proof}
We start with decompressing the prefix of length $n+m$. If its period $d$ is at least $\frac{n+m}{2}$, we can return it as a kernel. Otherwise we
compute the longest prefix of the pattern and the pattern shifted by $d$ characters (or, in other words, we compute how far the period extends
in the whole pattern). This can be performed using at most $2(n+m)$ queries described in Lemma~\ref{lemma:parse preprocessing} (note that being able to check if one block is a prefix of another would be enough to get a linear total complexity here, as we can first identify the longest prefix consisting of whole blocks in both words and then inspect the remaining at most $\min(n,m)$ characters naively). If $d$ is the period
of the whole pattern, we are done. Otherwise we identified a substring $s$ of length $n+m$ followed by a character $a$ such that the period of
$s$ is at most $d\leq\frac{n+m-1}{2}$ but the period of $sa$ is different (larger). We remove the first character of $s$ and get $s'$. Let $d'$ be
the period of $s'a$. If $d'\geq\frac{n+m}{2}$, $s'a$ is a kernel. Otherwise $d,d'\leq\frac{n+m-1}{2}$ are both periods of $s'$, and hence by 
Lemma~\ref{lemma:periodicity} they are both multiplies of the period of $s'$. Let $b$ be the character such that $d$ is a period of $s'b$ (note that $a\neq b$). Because $d$ is a period of $s'b$, $s'[|s'|+1-d]=b$. Similarly, because $d'$ is a period of $s'a$, $s'[|s'|+1-d']=a$. Hence $s'[|s'|+1-d]\neq s'[|s'|+1-d']$, and because $(|s'|+1-d)-(|s'|+1-d')$ is a multiple of the period of $s'$ we get a contradiction. Note that the prefix before $s'a$ is periodic with period $d\leq\frac{n+m}{2}$ by the construction.
\end{proof}

If the pattern turns out to be repetitive, we try to apply the algorithm described in the preliminaries. The intuition is that while we required a certain preprocessing of the whole pattern, when its period is $d$ it is enough preprocess just its prefix of length $\mathcal{O}(d)$. This intuition is formalized in Section~\ref{section:periodic}. If the pattern has a kernel, we use it to identify $\mathcal{O}(n)$ potential occurrences, which we then manage to verify efficiently. The verification uses a similar idea to the one from Lemma~\ref{lemma:parse preprocessing} but unfortunately it turns out
that we need to somehow compress the pattern during the verification as to keep the running time linear. The details of this part are given in Section~\ref{section:kernel}.

\section{Detecting occurrence of a periodic pattern}
\label{section:periodic}

If the pattern is periodic, we would like to somehow use this periodicity so that we do not have to preprocess the whole pattern (i.e., build the suffix tree, LCA structure, compute the borders of all prefixes and suffixes, and so on). It seems reasonable that preprocessing just the first few repetitions of the period should be enough. More precisely, we will decompress a sufficiently long prefix of $p$ and compute some of its occurrences
inside the text. To compute those occurrences we apply a fairly simple modification of \proc{Levered-pattern-matching} called
\proc{Lazy-levered-pattern-matching}. 

First observe that both Lemma~\ref{lemma:concatenation occurrence} and Lemma~\ref{lemma:lever occurrence} can be modified in a straightforward way so that we get the leftmost occurrence, if any. The original procedure quits as soon it detects that the pattern occurs. We would like it to proceed
so that we get more than one occurrence, though. A naive solution would be to simply continue, but then the following situation could happen:
both $\ell$ and $|s_{k}|$ are very close to $m$, the pattern occurs both in the very beginning of $p[1\twodots\ell]s_{k}$ and somewhere close to the boundary between the two parts, and the longest suffix of the concatenation which is a prefix of the pattern is very short. Then we would detect just the first occurrence, and for some reasons that will be clear in the proof of Lemma~\ref{lemma:periodic} this is not enough. Hence whenever there is an occurrence in the concatenation, we skip just the first half of $p[1\twodots\ell]$ and continue, see lines~\ref{line:modify start}-\ref{line:modify end}. This is the only change in the algorithm.

\begin{figure*}
\begin{algorithm}[H]
\caption{\proc{Lazy-levered-pattern-matching}$(s_1, s_2, \ldots, s_n)$}
\begin{algorithmic}[1]
\State $\ell \gets $ longest prefix of $p$ ending $s_1$ \Comment{{\bf Lemma~\ref{lemma:longest suffix}}}
\State $k \gets 2$
\While{$k \leq n$ and $\ell+\sum_{i=k}^{n}|s_i|\geq m$} 
  \State choose $t\geq k$ minimizing $|s_k|+|s_{k+1}|+\ldots+|s_{t-1}|-\frac{|s_t|}{2}$
  \If{$\ell+|s_k|+|s_{k+1}|+\ldots+|s_{t-1}| \leq \frac{|s_t|}{2}$}
     \State output the first occurrence of $p$ in $p[1\twodots \ell] s_k s_{k+1} \ldots s_t$, if any \Comment{{\bf Lemma~\ref{lemma:lever occurrence}}}
     \State $\ell \gets$ longest prefix of $p$ ending $p[1\twodots \ell] s_k s_{k+1} \ldots s_t$ \label{line:aperiodic 1}\Comment{{\bf Lemma~\ref{lemma:lever prefix}}}
     \State $k \gets t+1$
  \Else
     \State output the first occurrence of $p$ in $p[1\twodots \ell] s_k$, if any \Comment{{\bf Lemma~\ref{lemma:concatenation occurrence}}} 
     \If{$p$ occurs in $p[1\twodots \ell] s_k$} \label{line:modify start}
       \State $\ell \gets$ longest prefix of $p$ ending $p[\left\lceil\frac{\ell}{2}\right\rceil\twodots \ell]$ \label{line:aperiodic 2}
       \State {\bf continue}
     \EndIf \label{line:modify end}

    \If{$p[1\twodots \ell]s_k$ is a prefix of $p$} \label{line:while after occurrence}
      \State $\ell \gets \ell + |s_k|$
      \State $k \gets k + 1$
      \State {\bf continue}
    \EndIf
    \State $b \gets$ longest long border of $p[1\twodots \ell]$ s.t. $p[1\twodots b] s_k$ is a prefix of $p$ \Comment{{\bf Lemma~\ref{lemma:extending long border}}}
    \If{$b$ is undefined}
      \State $\ell \gets$ longest prefix of $p$ ending $p[\left\lceil\frac{\ell}{2}\right\rceil\twodots \ell]$
      \State {\bf continue}
    \EndIf
    \State $\ell \gets b + |s_k|$
    \State $k \gets k + 1$ \label{line:while ends}
  \EndIf
\EndWhile
\end{algorithmic}
\end{algorithm}
\vspace{-0.5cm}
\end{figure*}

While \proc{Lazy-levered-pattern-matching} it is not capable of generating all occurrences in some cases, it will always detect a lot of them, in a certain sense. This is formalized in the following lemma.

\begin{lemma}
\label{lemma:aperiodic lots of occurrences}
If the pattern of length $m$ occurs starting at the $i$-th character, \proc{Lazy-levered-pattern-matching} detects at least one occurrence starting at the $j$-th character for some $j\in\{i-\frac{m}{2},i-\frac{m}{2}+1,\ldots,i\}$.
\end{lemma}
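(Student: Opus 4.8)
The plan is to run an induction over the iterations of the while loop, carrying an invariant that records how far the reported occurrences lag behind the genuine ones. Write $c=|s_1\cdots s_{k-1}|$ for the frontier and $L=c-\ell+1$ for the left end of the currently trusted prefix, so that $p[1\twodots\ell]$ matches the suffix $(s_1\cdots s_{k-1})[L\twodots c]$; the point to keep in mind is that, because of the backups in lines~\ref{line:modify start}--\ref{line:modify end} and~\ref{line:aperiodic 2}, $\ell$ is \emph{a} prefix of $p$ ending at the frontier but not necessarily the \emph{longest} one. I would prove the following invariant: at the start of every iteration, every genuine occurrence of $p$ that begins at a position $i<L$ has already caused \proc{Lazy-levered-pattern-matching} to output an occurrence beginning in $\{i-\frac{m}{2},\ldots,i\}$. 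The base case is vacuous (after initialization no genuine occurrence can begin before $L$, since $\ell$ is the longest prefix ending $s_1$), and the only event that can threaten the invariant is the left end $L$ moving rightward to some $L'$, so in each branch I only have to account for the newly forgotten starts $i\in[L,L')$.

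First I would treat the two branches that leap past whole blocks while reporting a single leftmost occurrence, namely the lever step (Lemma~\ref{lemma:lever occurrence}) and the occurrence-backup step. In the lever step the examined region $p[1\twodots\ell]s_k\cdots s_t$ has length at most $\frac32|s_t|\le\frac32 m$, because the lever condition gives $\ell+\sum_{k\le j<t}|s_j|\le\frac{|s_t|}{2}$ and every block is a substring of $p$; hence the starts of all occurrences inside it lie in a window of length at most $\frac{m}{2}$, so the reported leftmost occurrence is within $\frac{m}{2}$ of every forgotten one. Moreover a genuine occurrence that continues past the region has matched prefix equal to a suffix of the region, hence of length at most the recomputed $\ell'$, so its start is $\ge L'$ and is never forgotten. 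The occurrence-backup step is analogous: "skipping the first half of $p[1\twodots\ell]$" forgets only starts in $[c-\ell+1,\,c-\lfloor\ell/2\rfloor-1]$, each within $\lceil\ell/2\rceil-2\le\frac{m}{2}$ of $L$ and hence of the reported leftmost occurrence, whereas every start in the retained right half again has matched prefix that is a suffix of $p[\lceil\ell/2\rceil\twodots\ell]$, hence of length $\le\ell'$, and survives with start $\ge L'$.

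Second I would show that the remaining two branches that push $L$ rightward (the long-border branch and the undefined-border backup) forget no genuine occurrence at all. These branches are reached only when $p$ has no occurrence in $p[1\twodots\ell]s_k$, so a genuine forgotten occurrence must continue past $s_k$; such a forgotten start lies in the skipped half, whence its matched prefix $\delta=c-i+1$ exceeds $\frac{\ell}{2}$ and is a long border of $p[1\twodots\ell]$, and continuing the occurrence through $s_k$ forces $p[1\twodots\delta]s_k$ to be a prefix of $p$. This contradicts either the maximality of $b$ in Lemma~\ref{lemma:extending long border} or the very fact that $b$ was undefined. The extend branch leaves $L$ unchanged and needs nothing. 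Finally, to pass from the invariant to the statement I would use the loop-exit conditions: when the loop stops with $\ell+\sum_{i\ge k}|s_i|<m$ no occurrence fits past $L$ at all, and when it stops with $k>n$ an occurrence starting at $\ge L$ forces $\ell=m$, a value only the last lever step can create and which that step already reports within $\frac{m}{2}$.

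The step I expect to be the main obstacle is the occurrence-backup analysis, which has to achieve two things at once: that halving never forgets a start more than $\frac{m}{2}$ to the left of the reported leftmost occurrence, and that it never forgets a still-viable occurrence (one whose matched prefix is a genuine prefix-that-is-a-suffix of $p[\lceil\ell/2\rceil\twodots\ell]$, hence bounded by $\ell'$). Making the floors and ceilings of "skip the first half" line up with the $\frac{m}{2}$ window, and handling the fact that several backups may occur consecutively at the same $k$ (each of which must re-establish the invariant before $k$ finally advances), is where the care will be needed; this is also precisely the situation the introductory discussion flags, where a short longest-matching-suffix would otherwise let a boundary occurrence be skipped.
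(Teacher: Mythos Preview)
Your proposal is correct and pinpoints the same two critical spots as the paper: the lever step (line~\ref{line:aperiodic 1}) and the occurrence-backup step (line~\ref{line:aperiodic 2}). The paper's own proof is much shorter because it leans on the correctness of the original \proc{Levered-pattern-matching} established in~\cite{GawrychowskiLZW}: since the only change is lines~\ref{line:modify start}--\ref{line:modify end}, the paper simply observes that these are the only two places where an occurrence can be skipped, and in each place the skipped occurrences lie within $\frac{m}{2}$ of the reported leftmost one (the lever region has length at most $\frac{3}{2}m$, and the discarded half has length at most $\frac{\ell}{2}\le\frac{m}{2}$). Your invariant-based induction is more self-contained---you also re-derive that the long-border and undefined-border branches forget no genuine occurrence, which is really part of the original algorithm's correctness---and your explicit handling of occurrences that continue past the current region (bounding their matched prefix by the recomputed $\ell'$) makes the argument tighter. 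The trade-off is brevity and reliance on prior work versus a standalone proof; the underlying combinatorics is identical.
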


\begin{proof}
There are just two places where we can lose a potential occurrence: line~\ref{line:aperiodic 1} and~\ref{line:aperiodic 2}. More precisely,
it is possible that we output an occurrence and then skip a few others. We would like to prove that the occurrences we skip are quite close
to the occurrences we output. We consider the two problematic lines separately.
\begin{description}
\item[line~\ref{line:aperiodic 1}] $s_{t}$ is a lever, so $\ell+|s_{1}|+\ldots+|s_{t}|\leq\frac{3}{2}m$. Hence the distance between any two occurrences
of the pattern inside $p[1\twodots \ell] s_k s_{k+1} \ldots s_t$ is at most $\frac{m}{2}$. We output the first of them, and so can safely ignore the
remaining ones.
\item[line~\ref{line:aperiodic 2}] If there is an occurrence, we remove the first half of $p[1\twodots\ell]$ and might skip some other occurrences starting
there. If the first occurrence starts later, we will not skip anything. Otherwise we output the first occurrence starting in $p[1\twodots\frac{\ell}{2}]$,
and if there is any other occurrence starting there, their distance is at most $\frac{\ell}{2}\leq\frac{m}{2}$, hence we can safely ignore the latter.
\end{description}
\end{proof}


\begin{lemma}
\label{lemma:aperiodic time}
\proc{Lazy-levered-pattern-matching} can be implemented to work in time $\mathcal{O}(n)$ and use $\mathcal{O}(m)$ additional memory.
\end{lemma}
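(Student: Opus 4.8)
The plan is to read \proc{Lazy-levered-pattern-matching} as \proc{Levered-pattern-matching} with the single extra block spanning lines~\ref{line:modify start}--\ref{line:modify end}. Since the original procedure runs in $\mathcal{O}(n)$ time and $\mathcal{O}(m)$ space (by the potential argument recalled in the preliminaries), and since one traversal of the new block costs only $\mathcal{O}(1)$ — it reuses the occurrence already reported by the preceding line via Lemma~\ref{lemma:concatenation occurrence} and performs the single halving at line~\ref{line:aperiodic 2}, which is exactly the constant-time operation already used when the long border $b$ is undefined — it suffices to prove that line~\ref{line:aperiodic 2} is executed at most $\mathcal{O}(n)$ times overall.

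First I record three facts about the branch containing line~\ref{line:aperiodic 2}. It lies in the single-snippet (\textbf{else}) case, so the lever test failed. Writing $q$ for the minimised quantity $|s_k|+\ldots+|s_{t-1}|-\frac{|s_t|}{2}$ chosen just above, the lever test is equivalent to $\ell\le -q$; the candidate $t=k$ gives $q\le -\frac{|s_k|}{2}$, so the failure of the test forces $\ell>\frac{|s_k|}{2}$, that is $|s_k|<2\ell\le 2m$. Moreover line~\ref{line:aperiodic 2} is reached only when $p$ occurs in $p[1\twodots\ell]s_k$; as $|p|=m$ this needs $\ell+|s_k|\ge m$, which with $|s_k|<2\ell$ gives $\ell>\frac{m}{3}$. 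Finally the halving resets $\ell$ to the longest prefix of $p$ that is a suffix of $p[\lceil\frac{\ell}{2}\rceil\twodots\ell]$, a string of length at most $\lceil\frac{\ell}{2}\rceil\le\frac{\ell}{2}+1$; hence, since $\ell>\frac{m}{3}$, each execution of line~\ref{line:aperiodic 2} decreases $\ell$ by $\Omega(m)$.

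Now I run an amortised argument with the potential $\Phi=\ell\ge 0$. The quantity $\ell$ can grow only when $k$ advances: when $p[1\twodots\ell]s_k$ is a prefix of $p$ (so $\ell$ becomes $\ell+|s_k|$ at line~\ref{line:while after occurrence}), when a long border is used (so $\ell$ becomes $b+|s_k|\le\ell+|s_k|$ at line~\ref{line:while ends}), and when a lever is applied (so $\ell$ is reset at line~\ref{line:aperiodic 1} to a prefix length of $p$, hence to at most $m$); in the first two cases $|s_k|<2\ell\le 2m$ by the single-snippet bound above. Since $k$ is nondecreasing and never exceeds $n$, there are at most $n$ single-snippet advances and at most $n$ lever applications, so the total increase of $\Phi$ over the whole run is $\mathcal{O}(mn)$. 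Each execution of line~\ref{line:aperiodic 2} lowers $\Phi$ by $\Omega(m)$ and no step makes $\Phi$ negative, so line~\ref{line:aperiodic 2} fires at most $\mathcal{O}(mn)/\Omega(m)=\mathcal{O}(n)$ times. Together with the $\mathcal{O}(n)$ cost of the unmodified part (including the $\mathcal{O}(i)$-per-lever work, which sums to $\mathcal{O}(n)$ as each snippet is consumed once) this yields the $\mathcal{O}(n)$ running time, and the space bound is inherited verbatim: the new block allocates nothing, so memory stays dominated by the $\mathcal{O}(m)$ preprocessing of the pattern prefix together with a constant number of working blocks.

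The hard part is exactly this control of line~\ref{line:aperiodic 2}: a priori the new \textbf{continue} can chain several times on one snippet, because halving $\ell$ and re-entering the loop without advancing $k$ is permitted, and a naive count would allow $\Theta(\log m)$ halvings per snippet and hence $\Theta(n\log m)$ in total. The escape is the observation that an occurrence in the single-snippet case can happen only while $\ell>\frac{m}{3}$, so every such halving releases $\Omega(m)$ units of the potential $\Phi=\ell$, whereas only $\mathcal{O}(mn)$ units are ever created; this imbalance is what forces the count back down to $\mathcal{O}(n)$. The only routine point left is to check that the constants and the $\lceil\cdot\rceil$ rounding behave for very small $\ell$ (equivalently, very small $m$), but this affects only the hidden constant.
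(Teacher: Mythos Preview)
Your argument is correct and takes a somewhat different route from the paper's. The paper simply observes that the new halving at line~\ref{line:aperiodic 2} behaves exactly like the already-present halving in the $b$-undefined branch: it either drops the original potential from~\cite{GawrychowskiLZW} by~$1$ or creates a lever for the next iteration, so a single potential covers every halving uniformly and the linear bound carries over unchanged.

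You instead isolate line~\ref{line:aperiodic 2} and bound its executions with a separate potential $\Phi=\ell$. The key observation---that being in the single-snippet branch forces $|s_k|<2\ell$, while an occurrence forces $\ell+|s_k|\ge m$, hence $\ell>m/3$---lets each firing shed $\Omega(m)$ from $\Phi$, whereas $\Phi$ can grow by at most $\mathcal{O}(m)$ per $k$-advance and thus $\mathcal{O}(mn)$ in total. This is a clean, self-contained count that does not require knowing the precise form of the potential in~\cite{GawrychowskiLZW}. The trade-off is that you still lean on that original potential implicitly when you invoke ``the $\mathcal{O}(n)$ cost of the unmodified part'' to control the $b$-undefined halvings; that reliance is safe, since the extra halvings you introduce at line~\ref{line:aperiodic 2} can only decrease, never increase, the original potential, so the earlier analysis is undisturbed.
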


\begin{proof}
The proof is almost the same as in~\cite{GawrychowskiLZW}. The only difference as far as the running time is concerned is
line~\ref{line:aperiodic 2}. By removing the first half of $p[1\twodots\ell]$ we either decrease the current potential by $1$ or create a lever and thus can amortize the constant time used to locate the first occurrence of the pattern inside $p[1\twodots\ell] s_{k}$.
\end{proof}

Note that \proc{Lazy-levered-pattern-matching} works with a sequence of snippets. By first applying the preprocessing mentioned in
the preliminaries we can use it to compute a small set which approximates all occurrences in a compressed text.

\begin{lemma}\label{lemma:set of occurrences}
\proc{Lazy-levered-pattern-matching} can be used to compute a set $S$ of $\mathcal{O}(n)$ occurrences of an uncompressed pattern of length $m\geq n$ in a compressed text such that whenever there is an occurrence starting at the $i$-th character, $S$ contains $j$ from $\{i-\frac{m}{2},i-\frac{m}{2}+1,\ldots,i\}$.
\end{lemma}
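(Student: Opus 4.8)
The plan is to combine the reduction sketched in the preliminaries with the two properties of \proc{Lazy-levered-pattern-matching} already established in Lemma~\ref{lemma:aperiodic lots of occurrences} and Lemma~\ref{lemma:aperiodic time}. First I would preprocess the uncompressed length-$m$ pattern (suffix tree, LCA structure, borders of all prefixes and suffixes) in $\mathcal{O}(m)$ time, and then apply the preprocessing mentioned in the preliminaries that turns LZW-compressed pattern matching into pattern matching in a sequence of extended snippets, replacing the $n$ codewords of the compressed text by a sequence of $\mathcal{O}(n)$ extended snippets whose concatenation is exactly the text. Feeding this sequence into \proc{Lazy-levered-pattern-matching} and collecting every reported occurrence yields the candidate set $S$.

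For the covering property I would argue as follows. Because the concatenation of the snippets equals the text, an occurrence of the pattern starting at the $i$-th character of the text is literally an occurrence starting at the $i$-th character of the snippet sequence. Lemma~\ref{lemma:aperiodic lots of occurrences} then guarantees that the algorithm reports an occurrence starting at some $j\in\{i-\frac{m}{2},\ldots,i\}$, and since every reported occurrence is placed into $S$, this is exactly the required property. Moreover every reported position is a genuine occurrence of $p$, so $S$ contains only true occurrences.

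The size bound $|S|=\mathcal{O}(n)$ is the quantitative heart of the statement, and I would obtain it by inspecting the loop. Each pass through the while loop outputs at most one occurrence, namely either the first occurrence in the lever window $p[1\twodots\ell]s_k\ldots s_t$ (the lever branch) or the first occurrence in $p[1\twodots\ell]s_k$ (the single-block branch). Hence $|S|$ is bounded by the number of loop iterations. Since every iteration performs at least a constant amount of actual work and, by Lemma~\ref{lemma:aperiodic time}, the whole execution runs in $\mathcal{O}(n)$ time on the $\mathcal{O}(n)$ input snippets, the number of iterations, and therefore $|S|$, is $\mathcal{O}(n)$. The assumption $m\geq n$ lets me absorb the $\mathcal{O}(n)$ reduction cost into the $\mathcal{O}(m)$ pattern preprocessing and keeps the working memory at $\mathcal{O}(m)$, as in Lemma~\ref{lemma:aperiodic time}.

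The main obstacle, and the only place where care is genuinely needed, is justifying that the iteration count stays $\mathcal{O}(n)$ despite the halving step in line~\ref{line:aperiodic 2}, which keeps $k$ fixed and only shrinks $\ell$. This is precisely what the amortized potential argument behind Lemma~\ref{lemma:aperiodic time} handles: each halving either decreases the potential by one or creates a lever that pays for the extra constant-time work, so the total number of iterations remains linear. A secondary, purely bookkeeping point to check is that the snippet reduction must preserve absolute positions so that the values reported by the algorithm translate back into occurrence positions in the text; this is immediate since the snippets are listed in text order and their lengths are known.
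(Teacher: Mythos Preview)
Your proof is correct and follows essentially the same route as the paper: reduce the compressed text to a sequence of snippets, invoke Lemma~\ref{lemma:aperiodic lots of occurrences} for the covering guarantee, and bound $|S|$ via the linear running time established in Lemma~\ref{lemma:aperiodic time}. The one nuance the paper makes explicit and you do not is that the hypothesis $m\geq n$ is used to ensure the reduction step itself produces no occurrences (no codeword is long enough to contain the pattern), so that every element of $S$ indeed arises inside \proc{Lazy-levered-pattern-matching}; beyond that, your iteration-count argument and the paper's appeal to the potential of the snippet sequences from~\cite{GawrychowskiLZW} reach the same $\mathcal{O}(n)$ bound.
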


\begin{proof}
As mentioned in the preliminaries, we can reduce compressed pattern matching to pattern matching in a sequence of snippets in linear time.
Because $m\geq n$, the preprocessing does not produce any occurrences yet. Then we apply \proc{Lazy-levered-pattern-matching}. Because its
running time is linear by Lemma~\ref{lemma:aperiodic time}, it cannot find more than $\mathcal{O}(n+m)$ occurrences. A closer look
at the analysis shows that the number of occurrences produced can be bounded by the potentials of all sequences created during
the initial preprocessing phase, which as shown in~\cite{GawrychowskiLZW} is at most $\mathcal{O}(n)$.
\end{proof}

Now it turns out that if the pattern is compressed but highly periodic, the occurrences found in linear time by the above lemma applied to
a sufficiently long prefix of $p$ are enough to detect an occurrence of the whole pattern.

\begin{lemma}
\label{lemma:periodic}
Fully compressed pattern matching can be solved in linear time if the pattern is of compressed size $m\geq n$ and its period is at most $\frac{n+m}{2}$. Furthermore, given a set of $r$ potential occurrences we can verify all of them in $\mathcal{O}(n+m+r)$ time.
\end{lemma}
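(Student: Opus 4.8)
The plan is to first dispose of the easy case $M\le n+m$ by decompressing the whole pattern (of length $\le n+m$) and invoking the $\mathcal{O}(n+M)$ compressed pattern matching algorithm of~\cite{GawrychowskiLZW}, which here runs in $\mathcal{O}(n+m)$ time; so from now on I assume $M>n+m$. I would decompress the prefix $P_0=p[1\twodots\ell_0]$ with $\ell_0=n+m$ in $\mathcal{O}(n+m)$ time, read off its period $d\le\frac{n+m}{2}\le m$, and write $p=u^k u'$ with $u=p[1\twodots d]$ and $u'$ a proper prefix of $u$. Since $\ell_0\ge 2d$, the string $P_0$ spans at least two full periods, so a match of $P_0$ at a text position $i$ pins down both the phase and the content of the period-$d$ structure of $t$ around $i$.

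The structural claim on which everything rests is that $p$ occurs at position $i$ of $t$ if and only if (i) $P_0$ occurs at $i$, and (ii) the maximal run of period $d$ of $t$ starting at $i$ has length at least $M$. The forward direction is immediate; for the converse, (i) fixes the first period to be $u$, (ii) guarantees that period $d$ holds throughout $t[i\twodots i+M-1]$, and Lemma~\ref{lemma:periodicity} then forces $t[i\twodots i+M-1]=u^k u'=p$, the tail $u'$ matching automatically as a prefix of $u$. This reduces both parts of the lemma to understanding the maximal period-$d$ runs of the compressed text.

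To solve the matching problem I would apply Lemma~\ref{lemma:set of occurrences} to the uncompressed prefix $P_0$ (legitimate since $\ell_0\ge n$), obtaining in $\mathcal{O}(n)$ time a set $S$ of $\mathcal{O}(n)$ genuine occurrences of $P_0$ which is dense: every true $P_0$-occurrence has a member of $S$ within $\ell_0/2$. Any such witness $j$ and the occurrence $i$ it covers satisfy $|i-j|+d\le\ell_0$, so by the periodicity lemma $|i-j|$ is a multiple of $d$ and $j$ lies in the same maximal period-$d$ run as $i$; hence $S$ meets every run carrying a $P_0$-occurrence, and there are at most $|S|=\mathcal{O}(n)$ of them since distinct runs are disjoint. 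For each $j\in S$ I would grow the maximal period-$d$ run $[A_j,B_j]$ containing $j$, discard duplicates, keep the runs of length $\ge M$, and in each report the arithmetic progression of $p$-occurrences (the $x\equiv j\pmod d$ with $A_j\le x$ and $x+M-1\le B_j$); the first occurrence overall is the minimum of the per-run leftmost ones. For the verification statement, once the run structure has been built in $\mathcal{O}(n+m)$ time I would sort the $r$ candidates and merge them against the runs, testing (i) by a single longest-common-prefix query between the text at the candidate and the decompressed $u$, and (ii) by locating the enclosing run, spending $\mathcal{O}(1)$ per candidate and hence $\mathcal{O}(n+m+r)$ in total.

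The main obstacle is the compressed computation of the period-$d$ runs. Growing a run amounts to testing $t[y]=t[y+d]$ over a stretch that can be astronomically longer than $n+m$, so I cannot walk it character by character, and the shift by $d$ does not respect codeword boundaries. The plan is to compute all break positions of period $d$, namely the $x$ with $t[x]\ne t[x+d]$, directly on the parse: I would scan the $\mathcal{O}(n)$ blocks of $t$ once and, at each block boundary, use the longest-common-prefix/suffix queries of Lemma~\ref{lemma:parse preprocessing} to compare a block against the text $d$ positions ahead in $\mathcal{O}(1)$, thereby jumping over whole blocks instead of single characters. Since the relevant runs are disjoint intervals of $t$ and their extension regions overlap only by the constant shift $d$, each block is charged $\mathcal{O}(1)$ queries, giving the $\mathcal{O}(n)$ bound. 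Making this charging argument airtight, and correctly handling the misalignment between the period $d$ and the block boundaries, is the delicate point on which the linear-time claim depends.
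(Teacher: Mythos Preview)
Your overall strategy matches the paper's: decompress a prefix of length $\Theta(n+m)$, apply Lemma~\ref{lemma:set of occurrences} to obtain a sparse anchor set $S$, extend the period-$d$ structure around each anchor, and reduce both detection and verification to arithmetic on the resulting runs. The structural characterization and the residue/run reasoning are the same; the paper, like you, relies on the primitivity of $u=p[1\twodots d]$ to conclude that anchors and occurrences inside a common run share a residue modulo $d$.

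The genuine difference is in how the period is extended. You grow each run in both directions by comparing the text against itself shifted by $d$, and then argue that, since distinct maximal runs are disjoint, each text block is charged $\mathcal{O}(1)$ longest-common-prefix queries. This is correct, but it hinges on deduplicating runs \emph{before} scanning (otherwise many anchors in the same long run would each trigger a full scan) and on the two-stream charging argument you flag as delicate. The paper avoids both subtleties with one extra observation: after grouping the anchors of a fixed residue into maximal chains $x_1<\dots<x_k$ with $x_{i+1}\le x_i+\tfrac{\alpha d}{2}$, the period cannot extend more than $\tfrac{3}{2}\alpha d$ past $x_k$, for otherwise the density guarantee of $S$ would supply another anchor extending the chain. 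Hence the extension past each group is bounded by a fixed $\mathcal{O}(n+m)$-length decompressed string, whose suffixes can be added as blocks; the alignment then costs only as many queries as there are text blocks in that window, and the windows for different groups overlap at most twice. So the paper only extends to the right, by a bounded amount, against a single-block string, which makes the linear bound immediate; your text-versus-shifted-text scan reaches the same conclusion but with more bookkeeping. Two small things to tighten in your write-up: growing to the left is unnecessary (the leftmost anchor in a run is already the leftmost $P_0$-occurrence there, by the density of $S$), and your phrase ``compute all break positions'' should be read as ``the two breaks bounding each relevant run'', since the total number of period-$d$ breaks in $t$ can be $\Theta(N)$.
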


\begin{proof}
We build the shortest prefix $p[1\twodots\alpha d]$ such that $\alpha d\geq n+m$, where $d\leq\frac{n+m}{2}$ is the period of the whole pattern. Observe that $\alpha d\leq \frac{3}{2}(n+m)$ and hence we can afford to store this prefix in an uncompressed form. By Lemma~\ref{lemma:set of occurrences} we construct a set $S$ of $\mathcal{O}(n)$ occurrences of $p[1\twodots\alpha d]$ such that for
any other occurrence starting at the $i$-th character there exists $j\in S$ such that $0\leq i-j\leq\frac{\alpha d}{2}\leq \frac{3}{2}(n+m)$. We partition the elements in $S$ according to their remainders modulo $d$ so that $S_{r}=\{j\in S: j\equiv r \pmod d\}$ and consider each $S_{r}$ separately. Note that we can easily ensure that its elements are sorted by either applying radix sort to the whole $S$ or observing that \proc{Lazy-levered-pattern-matching} generate the occurrences from left to right.

We split $S_{r}$ into maximal groups of consecutive elements $x_{1}<x_{2}<\ldots <x_{k}$ such that $x_{i+1}\leq x_{i}+\frac{\alpha d}{2}$, which clearly can be performed in linear time with a single left-to-right sweep. Each such group actually corresponds to a fragment starting at the $x_{1}$-th character and ending at the $x_{k}+\alpha d-1$ character which is a power of $p[1\twodots d]$. This is almost enough to detect an occurrence of the whole pattern. If the fragment is sufficiently long, we get an occurrence. In some cases this is not enough to detect the occurrence because we might be required to extend the period to the right as to make sufficient space for the whole pattern. Fortunately, it is impossible to repeat $p[1\twodots d]$ more than $\frac{3}{2}\alpha$ times starting at the $x_{k}$ character, as otherwise we would have another $x_{k+1}\in S_{r}$ which we might have used to extend the group. Hence to compute how far the period extends it would be enough to align $p[1\twodots\alpha d]p[1\twodots\frac{\alpha d}{2}]$ starting at the $x_{k}$ character and compute the first mismatch with the text. We can assume that all suffixes of $p[1\twodots\alpha d]$ are blocks with just a linear increase in the problem size, and hence we can apply Lemma~\ref{lemma:parse preprocessing} to preprocess the input so that each such alignment can be processed in time proportional to the number of block in the corresponding fragment of the text. To finish the proof, note that any single block in the text will be processed at most twice. Otherwise we would have two groups ending at the $x_{k}$-th and $x'_{k'}$-th characters such that $\left|x_{k}+\alpha d - (x'_{k'}+\alpha d)\right|\leq\frac{\alpha d}{2}$ and that would mean that one of those groups is not maximal. After computing how far the period extends after each group, we only have to check a simple arithmetic conditions to find out if the pattern occurs starting at the corresponding $x_{1}$.

To verify a set of $r$ potential occurrences, we construct the groups and compute how far the period extends after each of them as above. Then for each potential occurrence starting at the $b_{i}$-th character 
we lookup the corresponding $S_{b_{i}\bmod d}$ and find the rightmost group such that $x_{1}\leq b_{i}$.  We can verify an occurrence by
looking up how far the period extends after the $x_{k}$-th character and checking a simple arithmetic condition. To get the claimed time bound, observe that we do not have to perform the lookup separately for each possible occurrence. By first splitting them according to their remainders modulo $d$ and sorting all $x_{1}$ and $b_{i}$ in linear time using radix sort consisting of two passes we get a linear total complexity.
\end{proof}

\section{Using kernel to accelerate pattern matching}
\label{section:kernel}

We start with computing all occurrences of the kernel in both the pattern and the text. Because the kernel is long and aperiodic, there are no more than $2m$ of the former and $2n$ of the latter. The question is if we are able to detect all those occurrences efficiently. It turns out
that because the kernel is aperiodic, \proc{Lazy-levered-pattern-matching} can be (again) used for the task. More formally, we have the following
lemma.

\begin{lemma}\label{lemma:aperiodic}
\proc{Lazy-levered-pattern-matching} can be used to compute in $\mathcal{O}(n+m)$ time all occurrences of an aperiodic pattern of length $m\geq n$ in a compressed text. 
\end{lemma}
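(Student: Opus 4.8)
The plan is to run \proc{Lazy-levered-pattern-matching} exactly once, through Lemma~\ref{lemma:set of occurrences}, to obtain a set $S$ of $\mathcal{O}(n)$ genuine occurrences carrying the covering guarantee that every occurrence $i$ admits some $j\in S$ with $i-\frac{m}{2}\leq j\leq i$, and then to argue that for an \emph{aperiodic} pattern this $S$ is already the complete set of occurrences, so that no verification phase is needed at all. Since $m\geq n$, Lemma~\ref{lemma:set of occurrences} applies directly and produces $S$ in $\mathcal{O}(n+m)$ time using $\mathcal{O}(m)$ additional memory, which is exactly the budget we are allowed.

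The heart of the argument is a sparsity observation. If the pattern occurs at two distinct text positions $i'<i$ with $i-i'<m$, the two copies overlap and the shift $d=i-i'$ is a period of the pattern. As the pattern is \emph{aperiodic}, i.e.\ its shortest period exceeds $\frac{m}{2}$, no period can be at most $\frac{m}{2}$, so any two distinct occurrences must satisfy $i-i'>\frac{m}{2}$. Now take any occurrence $i$ together with the representative $j\in S$ supplied by the covering property, so that $0\leq i-j\leq\frac{m}{2}$. Because the elements of $S$ are themselves occurrences, $j$ and $i$ are two occurrences at distance at most $\frac{m}{2}$; by sparsity they cannot be distinct, hence $j=i$ and $i\in S$. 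Thus $S$ contains every occurrence, and since $S\subseteq\{\text{occurrences}\}$ it is precisely the set we want.

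The total running time is the $\mathcal{O}(n+m)$ spent inside Lemma~\ref{lemma:set of occurrences} (which itself relies on the linear bound of Lemma~\ref{lemma:aperiodic time}) plus $\mathcal{O}(|S|)=\mathcal{O}(n)$ to report the occurrences, giving $\mathcal{O}(n+m)$ overall. The step I expect to need the most care is the precise meaning of \emph{aperiodic}: the argument above uses the strict inequality (shortest period $>\frac{m}{2}$), whereas a pattern whose shortest period equals exactly $\frac{m}{2}$ could in principle have two occurrences exactly $\frac{m}{2}$ apart, in which case the representative from Lemma~\ref{lemma:set of occurrences} might be the left neighbour rather than $i$ itself. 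I anticipate this to be the only real obstacle, and to dispose of it either by insisting on the strict period bound (which is how the kernel is produced) or, should equality be unavoidable, by noting that the troublesome occurrences form short runs spaced by exactly $\frac{m}{2}$ whose missing right endpoints can be recovered with $\mathcal{O}(1)$ extra verifications per run using Lemma~\ref{lemma:parse preprocessing}. One should also ensure the aperiodic pattern is available in uncompressed form so that Lemma~\ref{lemma:set of occurrences} can be invoked; for the intended application the pattern has length $\mathcal{O}(n+m)$ and may simply be decompressed within the same time budget.
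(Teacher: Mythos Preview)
Your proposal is correct and follows essentially the same argument as the paper: invoke Lemma~\ref{lemma:set of occurrences} to obtain the covering set $S$, then use aperiodicity to conclude that any occurrence within distance $\frac{m}{2}$ of a member of $S$ must coincide with it, so $S$ is already complete. The paper's proof is a terse three-line version of exactly this; your added remarks on the strict-versus-nonstrict period threshold and on having the pattern available uncompressed are reasonable caveats that the paper simply glosses over.
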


\begin{proof}
By Lemma~\ref{lemma:set of occurrences} we can construct in linear time a set of occurrences such that any other occurrence is quite close to one of them. But the pattern is aperiodic, so if it occurs at positions $i$ and $j$ with $|i-j|\leq\frac{m}{2}$, then in fact $i=j$. Hence the set contains all occurrences.
\end{proof}

We apply the above lemma to find the occurrences of the kernel in both the pattern and the text. Each occurrence of the kernel in the text gives us a 
possible candidate for an occurrence of the whole pattern (for example by aligning it with the first occurrence of the kernel in the pattern). Hence
we have just a linear number of candidates to verify. Still, the verification is not trivial. An obvious approach would be to repeat a computation
similar to the one from Lemma~\ref{lemma:kernel} for each candidate. This would be too slow, though, as it might turn out that some blocks from the pattern are inspected multiple times. We require a slightly more sophisticated approach.


Using (any) kernel decomposition of the pattern we represent it as $p=p_1 p_2 p_3$, where the period of $p_1$ is at most $\frac{n+m}{2}$, and
$p_2$ is a kernel. We start with locating all occurrences of $p_2 p_3$ in the text. It turns out that because $p_2$ is aperiodic, there cannot
be too many of them. Hence we can afford to generate all such occurrences and then verify if any of them is preceded by $p_1$ as follows:

\begin{enumerate}
\item if $|p_1|\geq n$ then we can directly apply Lemma~\ref{lemma:periodic},
\item if $|p_1|<n$ then take the prefix of $p_1 p_2$ consisting of the first $n+m$ letters. Depending on whether this prefix is periodic with the period at most $\frac{n+m}{2}$ or aperiodic, we can apply Lemma~\ref{lemma:periodic} or Lemma~\ref{lemma:aperiodic}.
\end{enumerate}

The most involved part is computing all occurrences of $p_2 p_3$. To find them we construct a new string $T$ by concatenating the suffix of the pattern and the text:
$$T=p_2 p_3 \$t[1\twodots N]$$
For this new string we compute the values of the prefix function defined in the following way:
$$
\PREF[i] = \max\{j : T[k]=T[i+k-1] \text{ for all } k=1,2,\ldots,j \}
$$
Of course we cannot afford to compute $\PREF[i]$ for all possible $N+M$ values of $i$. Fortunately, $\PREF[i]\geq |p_2|$ iff $p_2$ occurs in $T$ 
starting at the $i$-th character. Because $|p_2|=n+m$ and $p_2$ is aperiodic, there are no more than $2\frac{N+M}{n+m}\leq n+m$ such values of $i$. We aim
to compute $\PREF[i]$ just for those $i$. First lets take a look at the relatively well-known algorithm which computes all $\PREF[i]$ for
all $i$, which can be found in the classic stringology book by Crochemore and Rytter~\cite{Jewels}. We state its code for the sake of completeness. 
$\proc{Naive-scan}(x,y)$ performs a naive scanning
of the input starting at the $x$-th and $y$-th characters. $\proc{PREF}$ uses this procedure in a clever way as to reuse already processed
parts of the input and keep the total running time linear. The complexity is linear because the value of $s+\PREF[s]$ cannot decrease nor
exceed $|T|$, and whenever it increases we are able to pay for the time spent in \proc{Naive-scan} using the difference between the new and the old
value.

\begin{figure*}
\begin{minipage}{\textwidth}
\begin{algorithm}[H]
\caption{\proc{PREF}$(T[1\twodots |T|])$}
\begin{algorithmic}[1]
\State $\PREF[1]=0$, $s \gets 1$
\For{$i=2,3,\ldots,|T|$}\label{line:for starts}
  \State $k \gets i-s+1$
  \State $r \gets s+PREF[s]-1$
  \If{$r<i$}
    \State $\PREF[i] = \proc{Naive-Scan}(i,1)$
    \If{$\PREF[i] > 0$}
      \State $s \gets i$
    \EndIf
  \ElsIf{$\PREF[k]+k < \PREF[s]$}
    \State $\PREF[i] \gets \PREF[k]$
  \Else
    \State $x \gets \proc{Naive-scan}(r+1,r-i+2)$ \label{line:access}
    \State $\PREF[i] \gets r-i+1+x$
    \State $s \gets i$
  \EndIf
\EndFor
\State $\PREF[1] = |t|$
\end{algorithmic}
\end{algorithm}
\end{minipage}
\end{figure*}

We will transform this algorithm so that it computes only $\PREF[i]$ such that the kernel occurs starting at the $i$-th character. We call
such positions $i$ \emph{interesting}. The first problem we encounter is that we need a constant time access to any $\PREF[i]$ and cannot afford to allocate a table of length $|T|$. This can be easily overcome.

\begin{lemma}
\label{lemma:random access}
A random access table $\PREF$ such that $\PREF[i]>0$ iff the kernel occurs starting at the $i$-th character can be implemented in constant time
per operation requiring space and preprocessing time not exceeding the compressed size of $T$, which is $\mathcal{O}(n+m)$.
\end{lemma}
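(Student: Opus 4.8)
The plan is to avoid ever allocating anything of size $|T|$ by exploiting that the interesting positions are both few and well separated. First I would recall the two structural facts already established: there are at most $n+m$ positions at which the kernel occurs in $T$, and any two distinct such positions differ by at least $\frac{n+m}{2}$. The separation is immediate from the definition of a kernel: its border is at most $\frac{n+m}{2}$, so its shortest period is at least $(n+m)-\frac{n+m}{2}=\frac{n+m}{2}$, and two overlapping occurrences induce a period equal to their distance, which therefore cannot be smaller than $\frac{n+m}{2}$ (non-overlapping occurrences are even farther apart). Crucially, these interesting positions are already in hand: they are exactly the occurrences of the aperiodic kernel in $T$, which we computed in $\mathcal{O}(n+m)$ time by Lemma~\ref{lemma:aperiodic}. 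Hence membership in the set of interesting positions is something we can decide explicitly rather than infer at query time.

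The structure itself is a bucket array. I would partition $\{1,2,\ldots,|T|\}$ into consecutive buckets of width $w=\lceil\frac{n+m}{2}\rceil$, so that a position $i$ maps to bucket $\lfloor i/w\rfloor$ in constant time. Since consecutive interesting positions differ by at least $\frac{n+m}{2}\geq w-1$, each bucket receives only $\mathcal{O}(1)$ interesting positions, and I store in bucket $\lfloor i/w\rfloor$ a constant-size record listing those positions together with their current $\PREF$ values. A read of $\PREF[i]$ then inspects the record of bucket $\lfloor i/w\rfloor$: if the record contains $i$ we return the stored value, and otherwise we return $0$; a write is handled symmetrically, updating the entry for $i$ inside its bucket. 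Every operation touches a single bucket and scans a constant number of records, so it runs in constant time. Because we only ever store a positive value at a genuine kernel occurrence, and every non-occurrence falls back to $0$, the required invariant $\PREF[i]>0$ iff the kernel occurs at the $i$-th character is maintained throughout.

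It remains to bound the number of buckets, which equals $\lceil |T|/w\rceil$, and this is the one step that is not purely mechanical and where the whole argument must be anchored. Everything hinges on $|T|$ being polynomial in $n+m$, a property special to LZW: a parse of size $k$ decompresses to a string of length $\mathcal{O}(k^2)$ (codeword lengths grow by at most one each), so $N=\mathcal{O}(n^2)$ and $M=\mathcal{O}(m^2)$, whence $|T|=|p_2 p_3|+1+N=\mathcal{O}((n+m)^2)$. Consequently the number of buckets is $\mathcal{O}((n+m)^2/(n+m))=\mathcal{O}(n+m)$. The bucket array, together with its constant-size records, therefore occupies $\mathcal{O}(n+m)$ space, and it is built in $\mathcal{O}(n+m)$ time by scanning the precomputed interesting positions once and dropping each into its bucket. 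This matches the claimed bounds and completes the plan. (An alternative to bucketing would be a deterministic static dictionary over the polynomial universe $\{1,\ldots,\mathcal{O}((n+m)^2)\}$, but the bucketing argument is self-contained and avoids invoking external perfect-hashing machinery, so I would present it as the primary construction.)
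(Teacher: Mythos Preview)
Your proposal is correct and matches the paper's approach: partition $T$ into buckets of width roughly $\frac{n+m}{2}$ so that each bucket holds at most a constant number of kernel occurrences, and use the quadratic LZW length bound $|T|=\mathcal{O}((n+m)^2)$ to conclude there are only $\mathcal{O}(n+m)$ buckets. The paper's version is terser---it simply asserts there are at most $2(n+m)$ fragments and exactly one occurrence per fragment---whereas you spell out the $|T|=\mathcal{O}((n+m)^2)$ step explicitly, which is a genuine improvement in clarity since that bound is the crux of the argument.
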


\begin{proof}
Observe that any two occurrences of the kernel cannot be too close. More precisely, their distance must be at least $\frac{n+m}{2}$. We split
the whole $T$ into disjoint fragments of size $\frac{n+m}{2}$. There are no more than $2(n+m)$ of them and there is at most one occurrence in
each of them. Hence we can implement the table by allocating an array of size $2(n+m)$ with each entry storing at most one element.
\end{proof}

We modify line~\ref{line:for starts} so that it iterates only through $i$ which are interesting. Note that whenever we access some $\PREF[j]$ inside,
$j$ is either $i, s$ or $k=i-s+1$. In the first two cases it is clear that the corresponding positions are interesting so we can access the corresponding
value using Lemma~\ref{lemma:random access}. The third case is not that obvious, though. It might happen that $k$ is not interesting and we will
get $\PREF[k]=0$ instead of the true value. If $r\geq i+|p_{2}|-1$ then because $p_{2}$ occurs at $i$, it occurs at $k$ as well, and so $k$ is interesting.
Otherwise we cannot access the true value of $\PREF[k]$, so we start a naive scan by calling $\proc{Naive-scan}(i+|p_{2}|,|p_{2}|+1)$ (we can start
at the $|p_{2}|+1$-th character because $p_{2}$ occurs at $i$). After the scanning we set $s\gets i$. Note that because $r<i+|p_{2}|-1$, this
increases the current value of $s+\PREF[s]$, and we can use the increase to amortize the scanning.

We still have to show how to modify \proc{Naive-scan}. Clearly we cannot afford to perform the comparisons character by character. By the increasing
$s+\PREF[s]$ argument, any single character from the text is inspected at most once by accessing $T[x]$ (we call it a left side access). It might be inspected multiple  times by accessing $T[y]$, though (which we call a right side access). We would like to perform the comparisons block by block using Lemma~\ref{lemma:parse preprocessing}. After a single query we skip at least one block. If we skip a block responsible 
for the left side access, we can clearly afford to pay for the comparison. We need to somehow amortize the situation when we skip a block
responsible for the right side access. For this we will iteratively compress the input (this is similar to the idea used in~\cite{GasieniecFully} with the exception that we work with $\PREF$ instead of the failure function). 
More formally, consider the sequence of blocks describing $p_{2} p_{3}$. First note that no further blocks from $T$ will be responsible for a right side access because of the unique $\$$ character. Whenever some two neighboring blocks $b_{1}, b_{2}$ from this prefix occur in the same block from the text $b'$, we would like to glue them, i.e., replace by a single block. We cannot be sure that there exists a block corresponding to their concatenation, but because we know where it occurs in $b'$ we can extract (in constant time, by using the level ancestor data structure~\cite{BenderAncestor} to preprocess the whole code trie) a block for which the concatenation is a prefix. We will perform such replacement whenever possible. Unfortunately, after such replacement we face a new problem: $p_2 p_3$ is represented as a concatenation of prefixes of blocks instead of whole blocks. Nevertheless, we can still apply Lemma~\ref{lemma:parse preprocessing} to compute the longest common prefix of two prefixes
of blocks $b[1..i]$ and $b'[1..i']$ by first computing the longest common prefix of $b$ and $b'$, and decreasing it if it exceeds $\min(i,i')$.
More formally, we store a \emph{block cover} of $p_2 p_3$.

\begin{definition}
A block cover of a word $w$ is a sequence $b_1[1\twodots i_1], \ldots, b_k[1\twodots i_k]$ of prefixes of blocks 
such that their concatenation is equal to $w$.
\end{definition}

\begin{figure*}
\begin{minipage}[b]{0.5\linewidth}
\centering
\includegraphics[scale=0.7]{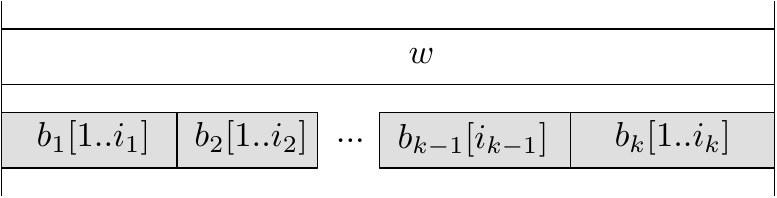}
\caption{A block cover of $w$.}
\label{figure:codeword cover}
\end{minipage}
\begin{minipage}[b]{0.5\linewidth}
\centering
\includegraphics[scale=0.7]{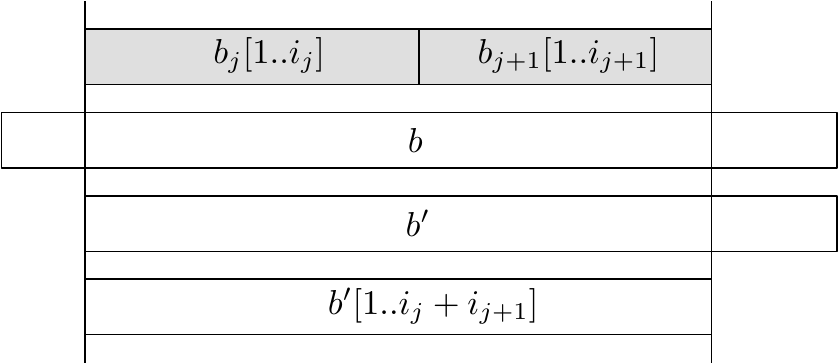}
\caption{Compressing the current block cover.}
\label{figure:compressing cover}
\end{minipage}
\end{figure*}

This definition is illustrated on Figure~\ref{figure:codeword cover}. Obviously, the initial partition of $p_2 p_3$ into blocks is a valid block cover.
If during the execution of \proc{Naive-scan} we find out that two neighboring elements $b_j[1\twodots i_1], b_{j+1}[1\twodots i_{j+1}]$ of the current cover occur in some other longer block $b$, we replace them with the corresponding prefix of $b'$, see Figure~\ref{figure:compressing cover}. 

We store all $b_j[1\twodots i_j]$ on a doubly linked list and update its element accordingly after each replacement. The final required step
is to show how we can quickly access in line~\ref{line:access} the block corresponding to $r-i+2$. We clearly are allowed to spend just constant
time there. We keep a pointer to the block covering the $r$-th character of the pattern. Whenever we need to access the block covering
the $(r-i+2)$-th character, we simply move the pointer to the left, and whenever the current longest match extends, we move the pointer to the right.
We cannot move to the left more time than we move to the right, and the latter can be bounded by the number of blocks in the whole $p_1 p_2$ if we replace the neighboring blocks whenever it is possible.


\begin{lemma}
\label{lemma:kernel acceleration}
Fully compressed pattern matching can be solved in linear time if we are given the kernel of the pattern.
\end{lemma}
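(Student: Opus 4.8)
The plan is to tie together the constructions developed above. Starting from a kernel decomposition $p=p_1 p_2 p_3$ (which exists by Lemma~\ref{lemma:kernel}), with $p_2$ the kernel and $p_1$ periodic of period at most $\frac{n+m}{2}$, I would split the task into locating every occurrence of the suffix $p_2 p_3$ in the text and then testing, for each, whether it is preceded by $p_1$. Since $p_2 p_3$ opens with the aperiodic kernel of length $n+m$, any two of its occurrences lie at least $\frac{n+m}{2}$ apart, so there are only $\mathcal{O}(n)$ of them, and hence only $\mathcal{O}(n)$ candidate positions for a full occurrence of $p$.

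To locate the occurrences of $p_2 p_3$ I would build $T=p_2 p_3 \$ t[1\twodots N]$ and compute $\PREF[i]$ only at the \emph{interesting} positions---those $i$ at which the kernel occurs---of which there are $\mathcal{O}(n+m)$ by aperiodicity; an occurrence of $p_2 p_3$ is precisely an interesting $i$ with $\PREF[i]\geq|p_2 p_3|$. I would run the modified \proc{PREF} scan: the table is stored in $\mathcal{O}(n+m)$ space via Lemma~\ref{lemma:random access}, and the one awkward access $\PREF[k]$ with $k=i-s+1$ is resolved by observing that $k$ is interesting when $r\geq i+|p_2|-1$ and, otherwise, by launching $\proc{Naive-scan}(i+|p_2|,|p_2|+1)$ and charging its cost to the ensuing increase of $s+\PREF[s]$.

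The main obstacle will be bounding the total time of \proc{Naive-scan}. Comparisons must proceed block by block through Lemma~\ref{lemma:parse preprocessing} rather than character by character. The left side accesses are paid for by $s+\PREF[s]$, a quantity that never decreases and never exceeds $|T|$; the right side accesses are the genuine difficulty, since one text block may be revisited repeatedly. I would handle them by maintaining a block cover of $p_2 p_3$ and iteratively \emph{gluing} two neighbouring prefixes of blocks whenever they fall inside a single text block, extracting the glued prefix in constant time using level-ancestor queries on the code trie. Each gluing shortens the cover, so the number of right side accesses is bounded by the number of blocks of $p_1 p_2$, and the pointer reaching the $(r-i+2)$-th character in line~\ref{line:access} moves left no more often than right; the whole scan therefore costs $\mathcal{O}(n+m)$.

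It remains to verify the $\mathcal{O}(n)$ candidates. Each occurrence of $p_2 p_3$ induces a candidate occurrence of $p$ starting $|p_1|$ characters earlier, and I would test the presence of $p_1$ by the two cases already isolated: when the compressed size of $p_1$ is at least $n$ I invoke the verification clause of Lemma~\ref{lemma:periodic}, which checks all $r=\mathcal{O}(n)$ candidates in $\mathcal{O}(n+m+r)$ time; otherwise I replace $p_1$ by the length-$(n+m)$ prefix of $p_1 p_2$ and, according to whether it is periodic with period at most $\frac{n+m}{2}$ or aperiodic, apply Lemma~\ref{lemma:periodic} or Lemma~\ref{lemma:aperiodic}. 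Every phase runs in $\mathcal{O}(n+m)$ time, which yields the claimed linear bound.
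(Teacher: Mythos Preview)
Your proposal is correct and follows the paper's approach essentially verbatim: kernel decomposition $p=p_1p_2p_3$, the modified \proc{PREF} scan on $T=p_2p_3\$t$ restricted to interesting positions with the block-cover gluing to amortize right-side accesses, and then the two-case verification of the $p_1$ prefix via Lemma~\ref{lemma:periodic} or Lemma~\ref{lemma:aperiodic}. One small slip: the case split should be on the \emph{uncompressed} length $|p_1|$ rather than the compressed size of $p_1$, since in the second branch you need the first $n+m$ characters of $p_1p_2$ to cover all of $p_1$, which is guaranteed only when $|p_1|<n$.
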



\begin{theorem}
Fully LZW-compressed pattern matching for strings over a polynomial size integer alphabet can be solved in optimal linear time assuming the word RAM model.
\end{theorem}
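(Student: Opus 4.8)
The plan is to assemble the machinery of the preceding sections into a single case analysis driven by the dichotomy of Lemma~\ref{lemma:kernel}. First I would set up the common preprocessing. Since $\Sigma=\{1,\ldots,(n+m)^c\}$, the two LZW parses can be radix-sorted in $\mathcal{O}(n+m)$ time on the word RAM, so the hypothesis of Lemma~\ref{lemma:parse preprocessing} is met and I obtain constant-time longest-common-suffix queries between any two codewords (equivalently, after the global reversal, a constant-time test of whether one block is a prefix of another). I reverse both $p$ and $t$ at the outset and work throughout with blocks; the only bookkeeping needed is to translate any reported position back through the reversal, so that the leftmost occurrence of $p$ in $t$ is recovered. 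All of this costs $\mathcal{O}(n+m)$.

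Next I run Lemma~\ref{lemma:kernel}, which in linear time either produces a kernel decomposition of the pattern or certifies that its period is at most $n+m$. These two outcomes are exhaustive, so the whole proof reduces to handling each branch in $\mathcal{O}(n+m)$ time. In the kernel branch I invoke Lemma~\ref{lemma:kernel acceleration} directly: given the kernel, fully compressed pattern matching is solved in linear time. In the periodic branch, when $m\ge n$, Lemma~\ref{lemma:periodic} solves the problem in linear time and even reports the leftmost occurrence. Adding the two preprocessing costs to whichever branch is taken gives the claimed $\mathcal{O}(n+m)$ bound.

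The one place that requires a second look is the periodic branch when $m<n$, which is not literally covered by the hypothesis $m\ge n$ of Lemma~\ref{lemma:periodic}. Here I would observe that the proof of that lemma never uses the compressed pattern as a whole: it decompresses the prefix $p[1\twodots\alpha d]$ of uncompressed length $\alpha d=\Theta(n+m)\ge n$ and feeds precisely this prefix to Lemma~\ref{lemma:set of occurrences}, whose hypothesis only asks that the uncompressed pattern have length at least $n$. Since $\alpha d\ge n+m\ge n$ this holds irrespective of whether $m\ge n$, so the group-splitting argument and the period-extension check go through verbatim and the periodic branch runs in $\mathcal{O}(n+m)$ time for every $m$. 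The same remark keeps the kernel branch honest, since its internal verification of whether each of the $\mathcal{O}(n)$ candidate alignments of $p_2 p_3$ is preceded by the periodic prefix $p_1$ is exactly the verification routine supplied by Lemma~\ref{lemma:periodic}.

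Finally I would argue optimality. The input consists of the two parses, of total size $\Theta(n+m)$, and a standard adversary argument shows that altering a single codeword can change whether $p$ occurs in $t$; hence any correct algorithm must inspect $\Omega(n+m)$ of the input, matching the upper bound. I expect the genuine difficulty to be entirely packaged inside Lemma~\ref{lemma:periodic} and Lemma~\ref{lemma:kernel acceleration}; at the level of the theorem the only nonroutine point is confirming that the $m<n$ periodic case is covered, which the observation above settles, together with the coordinate translation induced by the initial reversal needed to report the \emph{first} occurrence rather than the last.
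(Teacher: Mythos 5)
Your proposal is correct and takes essentially the same route the paper intends: the theorem is just the assembly of the dichotomy of Lemma~\ref{lemma:kernel} with Lemma~\ref{lemma:periodic} (periodic branch) and Lemma~\ref{lemma:kernel acceleration} (kernel branch), on top of the linear-time preprocessing of Lemma~\ref{lemma:parse preprocessing}. Your extra observation that the hypothesis $m\geq n$ in Lemma~\ref{lemma:periodic} is inessential---since its proof only feeds the decompressed prefix of length $\alpha d\geq n+m\geq n$ into Lemma~\ref{lemma:set of occurrences}---patches a point the paper itself glosses over, and does so in the way the paper's own usage (e.g., verifying the $p_1$-prefix in the kernel branch) implicitly requires.
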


\section{LZW parse preprocessing}
\label{section:lzw parse}

The goal of this section is to prove Lemma~\ref{lemma:parse preprocessing}. We aim to preprocess the codewords trie so that
given any two codewords, we can compute their longest common suffix in constant time. It turns out that we can use some
existing (while maybe not very known) tools to achieve that. The \emph{suffix tree of a tree} $A$, where $A$ is a tree with edges
labeled with single characters, is defined as the compressed trie containing $s_A(v)\$$ for all $v\in A$, where $s_A(v)$ is the string
constructed by concatenating the labels of all edges on the $v$-to-root path in $A$, see Figure~\ref{figure:suffix tree of tree}. This has been
first used by Kosaraju~\cite{KosarajuTree}, who developed a $\mathcal{O}(|A|\log |A|)$ time construction algorithm, where $|A|$ is the number
of nodes of $A$. The complexity has been then improved by Breslauer~\cite{BreslauerTree} to just $\mathcal{O}(|A|\log|\Sigma|)$ (which for constant 
alphabets is linear), and by Shibuya~\cite{ShibuyaTree} to linear for integer alphabets.

\begin{figure*}
\centering
\includegraphics{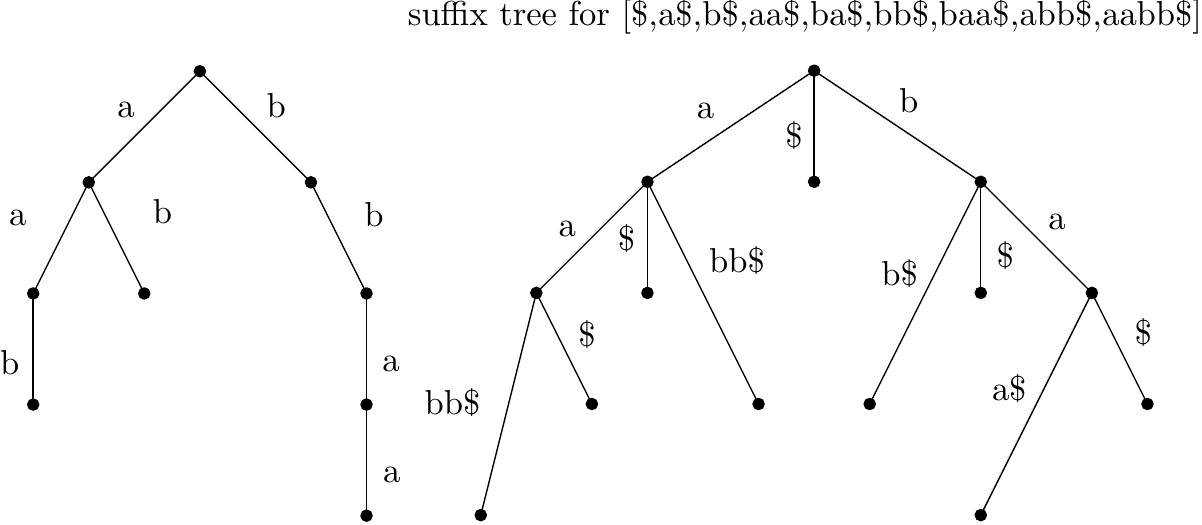}
\caption{A trie (on the left) and its suffix tree (on the right).}
\label{figure:suffix tree of tree}
\end{figure*}

We build the suffix tree of the codeword trie $T$ in linear time~\cite{ShibuyaTree}. As a result we also get for any node $v$ of the input trie the node of the suffix tree corresponding to $s_T(v)\$$. Now assume that we would like to compute the longest common suffix of two codewords corresponding to
nodes $u$ and $v$ in the input trie. In other words, we would like to compute the longest common prefix of $s_T(u)$ and $s_T(v)$. This can be found
in constant time after a linear time preprocessing by retrieving the lowest common ancestor of their corresponding nodes in the suffix tree~\cite{BenderLCA}. 


\bibliographystyle{abbrv}
\bibliography{biblio}

\begin{thebibliography}{10}

\bibitem{BenderLCA}
M.~A. Bender and M.~Farach-Colton.
\newblock The {LCA} problem revisited.
\newblock In {\em LATIN '00: Proceedings of the 4th Latin American Symposium on
  Theoretical Informatics}, pages 88--94, London, UK, 2000. Springer-Verlag.

\bibitem{BenderAncestor}
M.~A. Bender and M.~Farach-Colton.
\newblock The level ancestor problem simplified.
\newblock {\em Theor. Comput. Sci.}, 321(1):5--12, 2004.

\bibitem{BM}
R.~S. Boyer and J.~S. Moore.
\newblock A fast string searching algorithm.
\newblock {\em Commun. ACM}, 20(10):762--772, 1977.

\bibitem{Breslauer}
D.~Breslauer.
\newblock Saving comparisons in the {Crochemore-Perrin} string matching
  algorithm.
\newblock In {\em In Proc. of 1st European Symp. on Algorithms}, pages 61--72,
  1995.

\bibitem{BreslauerTree}
D.~Breslauer.
\newblock The suffix tree of a tree and minimizing sequential transducers.
\newblock In {\em Proceedings of the 7th Annual Symposium on Combinatorial
  Pattern Matching}, CPM '96, pages 116--129, London, UK, 1996.
  Springer-Verlag.

\bibitem{Jewels}
M.~Crochemore and W.~Rytter.
\newblock {\em Jewels of stringology}.
\newblock World Scientific, 2002.

\bibitem{RealGalil}
Z.~Galil.
\newblock String matching in real time.
\newblock {\em J. ACM}, 28(1):134--149, 1981.

\bibitem{ConstGalil}
Z.~Galil and J.~Seiferas.
\newblock Time-space-optimal string matching (preliminary report).
\newblock In {\em STOC '81: Proceedings of the thirteenth annual ACM symposium
  on Theory of computing}, pages 106--113, New York, NY, USA, 1981. ACM.

\bibitem{GasieniecFully}
L.~Gasieniec, M.~Karpinski, W.~Plandowski, and W.~Rytter.
\newblock Efficient algorithms for {Lempel-Ziv} encoding (extended abstract).
\newblock In {\em SWAT}, pages 392--403, 1996.

\bibitem{RytterGasieniec}
L.~Gasieniec and W.~Rytter.
\newblock Almost optimal fully {LZW}-compressed pattern matching.
\newblock In {\em DCC '99: Proceedings of the Conference on Data Compression},
  page 316, Washington, DC, USA, 1999. IEEE Computer Society.

\bibitem{GawrychowskiLZW}
P.~Gawrychowski.
\newblock Optimal pattern matching in {LZW} compressed strings.
\newblock In {\em SODA '11: Proceedings of the twenty-second annual ACM-SIAM
  symposium on Discrete algorithms}, 2011.

\bibitem{KMP}
D.~E. Knuth, J.~H. Morris, Jr., and V.~R. Pratt.
\newblock Fast pattern matching in strings.
\newblock {\em SIAM J. Comput.}, 6(2):323--350, 1977.

\bibitem{KosarajuTree}
S.~Kosaraju.
\newblock Efficient tree pattern matching.
\newblock {\em Foundations of Computer Science, Annual IEEE Symposium on},
  0:178--183, 1989.

\bibitem{Kosaraju}
S.~R. Kosaraju.
\newblock Pattern matching in compressed texts.
\newblock In {\em Proceedings of the 15th Conference on Foundations of Software
  Technology and Theoretical Computer Science}, pages 349--362, London, UK,
  1995. Springer-Verlag.

\bibitem{MP}
J.~H. Morris, Jr. and V.~R. Pratt.
\newblock A linear pattern-matching algorithm.
\newblock Technical Report~40, University of California, Berkeley, 1970.

\bibitem{ShibuyaTree}
T.~Shibuya.
\newblock Constructing the suffix tree of a tree with a large alphabet.
\newblock In {\em Proceedings of the 10th International Symposium on Algorithms
  and Computation}, ISAAC '99, pages 225--236, London, UK, 1999.
  Springer-Verlag.

\end{thebibliography}

\end{document}